\newcommand{\FFF}{\vspace*{6pt}}
\newcommand{\B}{\vspace*{-2pt}}
\newcommand{\BB}{\vspace*{-4pt}}
\newcommand{\Paragraph}[1]{\BB\BB\BB\B\paragraph{#1}}
\newcommand{\remove}[1]{}
\newcommand{\cA}{{\mathcal A}}
\newcommand{\cO}{{\mathcal O}}
\newcommand{\cE}{{\mathcal E}}
\newcommand{\qed}{\hfill $\square$ \smallskip}
\newenvironment{proof}{\noindent{\bf Proof:}}{\qed}
\newtheorem{theorem}{Theorem}
\newtheorem{lemma}{Lemma}
\newlength{\pagewidth}
\begin{document}

\baselineskip   		3ex
\parskip        		1ex

\title{			Adversarial Multiple Access Channels \\ with Individual Injection Rates 				\footnotemark[1]  \vfill}

\author{		Lakshmi Anantharamu 	\footnotemark[2] \and
			Bogdan S. Chlebus  		\footnotemark[3] \and
			Mariusz A. Rokicki 		\footnotemark[4] }

\footnotetext[1]{The results of this paper were presented in a preliminary form in~\cite{AnantharamuCR-OPODIS09} and in its final journal form in~\cite{AnantharamuCR-TCS17}. 
The work of the first author was supported by NSF Grant 1016847.
The work of the second author was supported in part by NSF Grants 0310503 and 1016847, and by the Engineering and Physical Sciences Research Council under [grant number EP/G023018/1].
The work of the third author was partly done when he was with the University of Colorado Denver, and it was supported by NSF Grant 0310503 and by the Engineering and Physical Sciences Research Council  under [grant number EP/G023018/1].}

\footnotetext[2]{Department of Computer Science and Engineering, 
			University of Colorado Denver, 
			Denver, Colorado 80217, U.S.A.}

\footnotetext[3]{Department of Computer Science and Engineering, 
			University of Colorado Denver, 
			Denver, Colorado 80217, U.S.A.}

\footnotetext[4]{Department of Computer Science, 
			University of Liverpool, 
			Liverpool L69 3BX, United Kingdom.}

\date{}

\maketitle

\vfill

%: abstract
.
\begin{abstract}
We study deterministic distributed broadcasting in synchronous multiple-access channels. 
Packets are injected into $n$ nodes by a window-type adversary that is constrained by a window~$w$ and injection rates  individually assigned to all nodes. 
We investigate what queue size and packet latency can be achieved with the maximum aggregate injection rate of one packet per round, depending on properties of channels and algorithms.
We give a non-adaptive  algorithm for channels with collision detection and an adaptive  algorithm for channels without collision detection that achieve $\cO(\min(n+w,w\log n))$ packet latency.
We show that packet latency has to be either $\Omega(w \max (1,\log_w n))$, when $w\le n$, or $\Omega(w+n)$, when $w>n$, as a matching lower bound to these algorithms.
We develop a non-adaptive algorithm for channels without collision detection that achieves $\cO(n+w)$ queue size and $\cO(nw)$ packet latency.
This is in contrast with the adversarial model of global injection rates, in which non-adaptive algorithms with  bounded packet latency do not exist (Chlebus et al. \emph{Distributed Computing} 22(2): 93--116, 2009).
Our algorithm avoids collisions produced by simultaneous transmissions; we show that any algorithm with this property must have $\Omega(nw)$ packet latency.

\vfill

\noindent
\textbf{Key words:}
multiple access channel, 
dynamic broadcasting,
adversarial queuing,
distributed algorithm,
deterministic algorithm,
stability,
packet latency.
\end{abstract}

\vfill

\thispagestyle{empty}

\setcounter{page}{0}

\newpage

\section{Introduction}

\label{sec:introduction}

A multiple access channel is a model of a communication environment supporting broadcasting among a set of nodes. 
What defines such a network is the property that a transmission by a node is heard by every node in the system precisely when this transmission does not overlap with other transmissions. 
Overlapping multiple transmissions collide with one another so that none can be received successfully.

There are two popular representations of multiple-access channels.
One representation is provided by single-hop radio networks.
These are wireless networks in which nodes use one radio frequency and a transmission by a node can be sensed by any other node. 
In such networks, overlapping transmissions interfere with one another, which results in receiving the contents attempted to be communicated as garbled, while a single transmission is successfully received by every node.
Another representation is provided by the implementation of local area networks by a wired Ethernet, as represented by the IEEE 802.3 collection of standards.
Abstracting multiple access channels from the medium access control of such communication environments allows one to study the optimality of communication algorithms in an idealized but precisely defined algorithmic communication setting, without the constraints of continually evolving wireless technologies and IEEE standards. 

We  consider the synchronous slotted model of communication, in which executions are partitioned into rounds determined by a global clock.
When a transmission is successful, then the transmitted message is immediately received by every other node attached to the multiple-access channel.
Messages transmitted in one round by different nodes are considered as overlapping and so colliding with one another.
With these stipulations, at most one message can be successfully broadcast in a round.

Performance of broadcast algorithms can be measured in terms of various natural metrics.
Among them, queue size and packet delay are the most often considered. 
Stability of a system means that the number of packets stored in local queues is bounded in all rounds.
An apparently weakest expectation about the amount of time spent by packets in the system is that every packet is eventually successfully transmitted, see~\cite{ChlebusKR09} and~\cite{RosenT06}.
Packet latency denotes the maximum time that a packet may spend in a queue.
Even when every packet is eventually successfully transmitted, there may be no finite bound on the times that packets spend waiting in queues; when a finite bound exists then the system is stable.

It is natural to consider packet generation as restricted by two parameters: (1) an average frequency of injection at nodes and (2) an upper bound on the number of packets that can be injected in one round.
The frequency of packet injections into the system is called injection rate.
Such frequencies are typically determined by the statistical constraints when modeling and simulating networks; in this paper we do not adhere to such approaches.
The number of packets that can be injected into the system in one round is often called the burstiness of traffic.

Using randomization is a natural means to implement efficient arbitration for access to the channel; popular randomized algorithms include Aloha \cite{Abramson85} and backoff ones \cite{BenderFHKL05,GoldbergJKP04,HastadLR96,MetcalfeB76}.
We consider deterministic algorithms that do not resort to randomization.
We use adversarial queuing as the methodology to study the worst-case performance of these deterministic communication algorithms.
Adversarial queuing is based on defining packet injection rates as upper bounds on the average number of packets inserted into the system.
The averages of the numbers of injections are defined without resorting to stochastic notions.

Recent work on adversarial queuing for multiple access channels, see~\cite{AnantharamuCKR-INFOCOM10,ChlebusKR09,ChlebusKR-TALG}, concerned adversaries constrained by ``global'' injection rates, in the sense that the adversary is restricted by the number of packets injected into all nodes but not by how many packets are injected into any specific node.
We consider adversaries with individual rates associated with nodes; in this model the adversary's injection rate is constrained separately for each node.
An adversary with injection rates associated with individual nodes is more restricted than one with a ``global'' injection rate.
We understand the throughput of an algorithm as the maximum (global) rate for which it is stable.
One may observe that the (global) injection rate of one packet per round is the maximum rate that allows for a stable algorithm.
It follows that the throughput of any algorithm is at most~$1$.

A definition of injection rates depends on how we average injections.
There are two popular approaches to define averaging, which produce two corresponding classes of  adversaries, called leaky-bucket and window ones.
The former class comprises general adversaries for which all the time intervals are used to constrain injection rate by averaging over them.
The latter class restricts intervals over which averages are taken to be of the same ``window'' length, which is a parameter of an adversary.
When injection rates are smaller than~$1$ then the two adversarial models of ``global'' injection rates are equivalent, as shown by Ros{\'e}n~\cite{Rosen02}.

In the context of adversarial queuing in store-and-forward networks, ``globally constrained'' window adversaries were first used by Borodin et al. \cite{BorodinKRSW01} and similar leaky-bucket ones by Andrews et al. \cite{AndrewsAFLLK01}.
For the multiple access channel, Chlebus et al.~\cite{ChlebusKR-TALG} used window adversaries with injection rates smaller than~$1$, which does not restrict the generality of results for such rates.
Chlebus et al.~\cite{ChlebusKR09} used both models for ``global'' injection rate~$1$; for such highest possible ``global'' rate, the window adversarial model is strictly weaker, see \cite{ChlebusKR09,Rosen02}.

We consider constraints on adversaries formulated as restrictions on what  can be injected separately at each node and also by what can be injected into all the nodes combined. 
Constraints on adversaries are \emph{global} if they are formulated in terms of  the numbers of packets injected in suitable time intervals, without any concern about the nodes in which the packets are injected.
Traffic constraints are \emph{local} when the patterns of injection are considered separately  for each node.
Constraints for the whole network implied by local ones are called \emph{aggregate} in this paper.
In particular, we may have local  and aggregate and global burstiness, and local and aggregate and global injection rates.
Observe that global constraints are logically weaker than aggregate constraints.
Global and aggregate injection rate~$1$ is the maximum that a channel can handle in a stable way,  since stability provides that the throughput rate is at least as large as the injection rate.
Adversaries with global injection rates were used in~\cite{ChlebusKR09,ChlebusKR-TALG} to model traffic requirements in multiple access channels.
In this paper, we introduce adversaries with local injection rates.

To categorize algorithms, we use the terminology similar to that in in~\cite{ChlebusKR09,ChlebusKR-TALG}.
Algorithms may either use control bits piggybacked on transmitted packets or not; when they do,  then they are called adaptive.

%: par: our results

\Paragraph{A summary of the results.}

We study effectiveness of broadcasting when traffic demands are specified as adversarial environments with individual injection rates associated with nodes and when algorithms are both distributed and deterministic.
We allow the adversaries to be such that the associated aggregate injection rate (the sum of all the individual rates) is~$1$, which is the maximum value allowing for stability.
This is the first study of adversaries injecting packets into multiple-access channels and restricted this way by individual rates, to the authors' knowledge.

A general goal is to explore what quality of broadcast can be achieved for individual injection rates. 
In particular, we want to compare adversarial environments defined by individual rates with those determined by global ones,  under the maximum average broadcast load of one packet per round.
The underlying motivation for this work was that individual injection rates are more realistic in moderate time spans and hopefully the limitations on the quality of broadcast with throughput~$1$ discovered in~\cite{ChlebusKR09} could be less severe when the rates are individual.
Indeed, we show that bounded packet latency is achievable with individual injection rates when the aggregate rate is~$1$.
This is in contrast with the model of global injection rates, for which achieving bounded waiting times is impossible when the throughput equals~$1$, as demonstrated in~\cite{ChlebusKR09}.

We denote the number of stations by $n$ and the adversary's window size by~$w$.

The comparison of the two models of global and individual injection rates for a window-type adversary with respect to possibility/impossibility of obtaining stability or bounded packet latency is summarized in Table~\ref{table:comparison}.
Additional explanation of this table is as follows.
The possibility in a given row means the existence of a broadcast algorithm of a given class (either acknowledgment-based or non-adaptive or adaptive) that always achieves the claimed performance characteristic (either stability or bounded packet latency) in a system of a given size (either $n=2$ or $n=3$ or $n\ge 4$).
The column for the model of global injection rates reflects the results given in~\cite{ChlebusKR09}.

%: table comparing global and individual adversaries

\newcommand{\RB}{\raisebox{3ex}{~}}
\newcommand{\LB}{\raisebox{-1.5ex}{~}}

\begin{table}
\caption{\label{table:comparison}
A comparison of possibilities and impossibilities to achieve performance characteristics, depending on whether a window adversary of the aggregate injection rate~$1$ is constrained by a global injection rate or individual injection rates.}
\begin{center}
\begin{tabular}{l l l l l }
\hline
\RB \LB
class of algorithms&size&quality& global rate & individual rates\\
\hline\hline
\RB \LB
ack-based & $n=2$ & stable
& I & I \\
\hline
\RB \LB
non-adaptive & $n=2$ & bounded latency
& P & P \\
\hline
\RB \LB
non-adaptive & $n=3$ & stable
& I & P \\
\hline
\RB \LB
adaptive & $n=3$ & bounded latency
& P & P \\
\hline
\RB \LB
adaptive & $n\ge 4$ & stable
& P & P \\
\hline
\RB \LB
adaptive & $n\ge 4$ & bounded latency
& I & P \\
\hline
\end{tabular}
\end{center}
\FFF

\caption*{
The parameter $n$ denotes the number of stations.
Letter P denotes possibility and I impossibility.
The impossibility results hold for channels with collision detection and the possibility results hold for channels without collision detection.
}
\end{table}

We further investigate the model of individual injections by deriving bounds on queue size and packet latency.
Acknowledgment-based algorithms cannot achieve throughput~$1$, which strengthens the result for global injection rates~\cite{ChlebusKR09}.
We give a non-adaptive algorithm of $\cO(\min(n+w,w\log n))$ packet latency when collision detection is available.
An adaptive algorithm can achieve similar performance without collision detection; this is because control bits allow to simulate collision detection with a constant overhead per round.
Bounded packet latency can also be achieved by non-adaptive algorithms in channels without collision detection.
More precisely, we develop a non-adaptive algorithm with $\cO(n+w)$ size queues and $\cO(nw)$ packet latency.
The optimality of our non-adaptive  algorithm for channels without collision detection, in terms of packet latency, is left open, but we demonstrate optimality in the class of  algorithms that avoid collisions altogether.

%: par: previous work

\Paragraph{Previous work.}

Most of the previous work on dynamic broadcasting in multiple-access channels has been carried out under the assumption that packets were injected subject to stochastic constraints.
Such systems can be modeled as Markov chains with stability usually understood as ergodicity, but alternatively, stability may mean that throughput under the given injection rate is as large as the  injection rate.
Gallager~\cite{Gallager85} gives an overview of early work in this direction and Chlebus~\cite{Chlebus-randomized-radio-chapter-2001} surveys later work.
In particular, the popular and early developed broadcast algorithms, like Aloha~\cite{Abramson85} and binary exponential backoff~\cite{MetcalfeB76}, have been extensively studied for stochastic injection rates.
For recent work, see the papers by Bender et al.~\cite{BenderFGY-SODA16}, Goldberg et al.~\cite{GoldbergJKP04,GoldbergMPS00}, H\aa stad et al.~\cite{HastadLR96}, and Raghavan and Upfal~\cite{RaghavanU98}.
Stability of backoff algorithms for multiple-access channels with packet injection controlled by a window adversary was studied by Bender et al.~\cite{BenderFHKL05} in the queue-free model; they showed that the exponential backoff is stable for $\cO(1/\log n)$ arrival rates and it is unstable for arrival rates that are $\Omega(\log \log n/\log n)$.
Awerbuch et al.~\cite{AwerbuchRS08} developed a randomized algorithm for multiple-access channels competing against an adaptive jamming adversary that achieves a constant throughput for the non-jammed rounds.

Deterministic distributed broadcast algorithms for multiple-access channels in the model with queues were first studied by Chlebus et al.~\cite{ChlebusKR-TALG} according to the methodology of adversarial queuing.
They considered classes of deterministic distributed algorithms, including adaptive and acknowledgment based.
They defined latency to be fair when it was $\cO(\text{burstiness}/\text{rate})$ and considered stability to be strong when queues were $\cO(\text{burstiness})$.
That paper~\cite{ChlebusKR-TALG} gave a non-adaptive algorithm achieving fair latency for $\cO(1/{\text{polylog}\ n})$ rates and showed that no algorithm could be strongly stable for rates that are $\omega(\frac{1}{\log n})$.
They also showed that no oblivious  acknowledgment-based algorithm could be stable for rates larger than $\frac{3}{1+\lg n}$, and hence that there are no universally stable oblivious acknowledgment-based algorithms.
Two oblivious acknowledgment-based algorithms were developed in~\cite{ChlebusKR-TALG}: one of fair latency for rates at most  $\frac{1}{c n\lg^2 n}$, for a sufficiently large $c>0$, and  an explicit one of fair latency for rates at most $\frac{1}{27 n^2\ln n}$.
In the subsequent work on the deterministic distributed algorithms for the multiple access channel with global injection rates, Chlebus et al.~\cite{ChlebusKR09} investigated the quality of broadcast for the maximum throughput, that is, the maximum rate for which stability is achievable.
They defined fairness to mean that each packet is eventually heard on the channel.
They developed a stable algorithm with $\cO(n^2+\text{burstiness})$ queues against leaky-bucket adversaries of injection rate~$1$, which demonstrated that throughput~$1$ is achievable.
They also showed the following inherent limitations on broadcasting with throughput~$1$: no such algorithm can be fair for a system of at least two nodes against leaky-bucket adversaries, queues may need to be $\Omega(n^2+\text{burstiness})$, and broadcast algorithms need to be adaptive.

Anantharamu et al.~\cite{AnantharamuCKR-INFOCOM10} studied packet latency of broadcasting on adversarial multiple access channels by deterministic distributed algorithms when injection rates was less than~$1$.
This was continued by Anantharamu et al.~\cite{AnantharamuCKR-SIROCCO11} who considered adversarial queuing on multiple-access channels when the adversary can jam the channel.

Anantharamu and Chlebus~\cite{AnantharamuC-TCS15} considered the adversarial model in which the adversary can activate otherwise passive and anonymous nodes by injecting packets into them.
The adversary was constrained to be able to activate at most one passive node in a round; such a node remains active as long as it has packets to broadcast.
They showed that positive injection rates could be handled with bounded packet latency; the specific magnitude of rates depends on the class of algorithms, with the injection rate of $\frac{1}{2}$ being the largest one among them.
They also demonstrated that no algorithm could achieve bounded packet latency when an injection rate is greater than $\frac{3}{4}$.

%: par: related work

\Paragraph{Related work.}

Adversarial queuing has proved to be a viable methodology to represent stability of communication algorithms  without statistical assumptions about packet injection.
This methodology was first applied to store-and-forward routing.
Borodin et al.~\cite{BorodinKRSW01} proposed this in the context of work-preserving routing algorithms when packets are routed along paths stipulated by the adversary at the time of injection, so that routing is restricted to a scheduling policy.  
The subsequent work by Andrews et al.~\cite{AndrewsAFLLK01} concentrated on the notion of universal stability, which for a scheduling policy means stability in any network, and for a network denotes stability of an arbitrary scheduling policy, both properties to hold under injection rates smaller than~$1$.
Lotker et al.~\cite{LotkerPR04} demonstrated that every work-conserving scheduling policy is stable if the injection rate is suitably small, as determined by the length of the longest path traversed by a packet.
FIFO as one of the most popular scheduling policies was studied extensively, see~\cite{BhattacharjeeGL04,CholviE07}.
Ros\' en and Tsirkin~\cite{RosenT06} considered routing against rate-$1$ adversaries; they defined reliability of an algorithm to mean that each packet is eventually delivered and  showed that reliability is achievable only in networks with no cycles of length at least~$3$. 
\'Alvarez et al.~\cite{AlvarezBDSF05} applied adversarial models to capture phenomena related to routing with varying priorities of packets and to study their impact on universal stability.
Andrews and Zhang~\cite{AndrewsZ98} gave a universal algorithm to control traffic when nodes operate as switches that need to reserve the suitable input and output ports to move a packet from the input port to the respective output one.
\'Alvarez et al.~\cite{AlvarezBS-ICPADS04} addressed the  stability of algorithms in networks with links prone to adversarial failures.
Andrews and Zhang~\cite{AndrewsZ03} proposed suitable adversarial models for networks in which nodes represent  switches connecting inputs with outputs so that routed packets encounter additional congestion constrains at nodes when they compete with other packets for input and output ports. 
Andrews and Zhang~\cite{AndrewsZ07} studied routing in wireless networks when data arrivals and transmission rates are governed by an adversary.
Blesa et al.~\cite{BlesaCFLMSST09} extended the adversarial model of wired networks to capture variability of speeds of links and sizes of packets.

Static broadcasting in multiple-access channels was considered by Greenberg and Winograd~\cite{GreenbergW-JACM85}, Koml\'os and Greenberg~\cite{KomlosG-TIT85}, and more recently by Kowalski~\cite{Kowalski05}.
Bie\'nkowski et al.~\cite{BienkowskiKKK-STACS10} and Czy\.zowicz et al.~\cite{CzyzowiczGKP11} considered algorithmic solutions on multiple-access channels for distributed-computing primitives like consensus and mutual exclusion.
Chlebus et al.~\cite{ChlebusKL-DC06} and Clementi et al.~\cite{ClementiMS02} studied the problem of performing independent idempotent tasks by distributed algorithms when processors communicate over a multiple-access channel.
The problem of waking up a multiple-access channel was first considered  by G\k asieniec et al.~\cite{GasieniecPP-JDM01}; see \cite{DeMarcoK13, DeMarcoPS07, JurdzinskiS02} for later work.

Gilbert et al.~\cite{GilbertGN09} proposed to model disruptive interference in multi-channel single-hop networks by a jamming adversary. 
This was further investigated in a number of papers including the following ones.
Dolev et al.~\cite{DolevGGN07} considered restricted gossiping in which a constant fraction of rumors needs to be disseminated when the adversary can disrupt one frequency per round.
Gilbert et al.~\cite{GilbertGKN-INFOCOM09} studied gossiping in which the adversary can disrupt up to $t$ frequencies per round and eventually all but $t$ nodes learn all but $t$ rumors.
 Dolev et al.~\cite{DolevGGKN09} considered synchronization of a multi channel under adversarial jamming.

%: par: document structure

\Paragraph{Document structure.}

This paper is organized as follows.
We summarize and review technical preliminaries in Section~\ref{sec:technical}.
Impossibility results and lower bounds are all grouped together in Section~\ref{sec:impossibilities}.
The two cases of non-adaptive algorithms for channels with collision detection and adaptive algorithms for channels without collision detection are presented in Section~\ref{sec:two-algorithms}.
A non-adaptive algorithm for channels without collision detection is given in Section~\ref{sec:non-adaptive-no-collision-detection}.
We conclude with final remarks in Section~\ref{sec:conclusion}.

\section{Technical Preliminaries}

\label{sec:technical}

Multiple access channels are specialized communication networks that support broadcast through their architecture.
In this section, we specify what properties of a communication network make it a multiple access channel.
Next, we define adversarial models, classes of broadcast algorithms and performance metrics.

We consider networks that operate in a \emph{slotted time}, which means that an execution of an algorithm is a sequence of rounds.
All nodes begin and end a round at the same time.
These stipulations determine that a network operates synchronously, as if the nodes were coordinated by a global clock.

A node may choose either to transmit or pause in a round.
Everything that a node transmits in a round is called \emph{message}.
Intuitively, messages overlapping in time collide with each other and none can be successfully received.
We assume that messages are big enough so that when two nodes transmit messages in the same round then the transmissions overlap in time.

%: par: channels

\Paragraph{Multiple access channels.}

A node receives a feedback from the channel in each round.
A message successfully received by a node is \emph{heard} by the node.
A broadcast system is a \emph{multiple-access channel} when a message is heard by a node if and only if it is the only message transmitted in this round by any node in the network.
This implies that at most one message per round can be heard, so that the throughput rate of a multiple access channel is at most~$1$.
We consider the following three cases determined by the  multiplicity of transmissions in a round in order to introduce additional terminology.
\begin{quote}
\begin{description}
\item[\rm There are no transmissions in a round:] nodes receive  \emph{silence}  from the channel as feedback.
Such a round is called \emph{silent}.

\item[\rm There is one transmission in a round:]
 the message is heard by all the nodes in the same round.
 
\item[\rm More than one transmissions in a round:] no node can hear any message.
Multiple transmission in the same round result in a conflict for access to the channel, which we   call a \emph{collision}. 
\end{description}
\end{quote}
Transmitting and listening to the channel are considered independent activities, in that a transmitting node obtains the same feedback from the channel as a node that does not transmit in the round.
The channel is \emph{with collision detection} when the feedback from the channel allows the nodes to distinguish between silence and collision, otherwise the channel is \emph{without collision detection}.
If no collision detection mechanism is available, then nodes obtain the same feedback from the channel during a collision as during a silent round.

We define a round  \emph{void} when no packet is heard in this round.
When a round is void then either it is silent or a collision occurs in it.

%: par: adversaries

\Paragraph{Adversaries.}

An adversary is defined by a set of allowable patterns of injections of packets into nodes.
An adversary generates a number of packets in each round and assigns to each packet  the node into which the packet is injected. 
We first recall the standard definition of globally constrained window-type adversaries.
Next we constrain adversaries further by  separately specifying how packets are injected into each node.

A (globally constrained) \emph{window-type adversary} is restricted by \emph{injection rate~$\rho$} and \emph{window size~$w$}. 
These two numbers~$w$ and~$\rho$ constrain the adversary's behavior as follows: for any contiguous time interval~$\tau$ of~$w$ rounds, the adversary may inject at most~$\rho w$ packets in~$\tau$.

\emph{Window adversaries with individual injection rates} are defined as follows:
Let there be given a positive integer number $w$, which is again called \emph{window size}.
Each node~$i$ is assigned a \emph{share~$s_i$}, which is a non-negative integer.
The shares satisfy the requirement $\sum_{i=1}^n s_i \le w$.
These numbers constrain the adversary's behavior as follows:  in any time interval of~$w$ contiguous rounds, the adversary may insert up to~$s_i$ packets into node~$i$.
The \emph{local injection rate} of node~$i$ is defined to be the number $\rho_i=s_i/w$.
The \emph{aggregate injection rate} is denoted by $\rho=(\sum_{i=1}^n s_i) / w$.
We refer to such a window adversary, with individual injection rates, as being of \emph{local type  $(\langle s_i\rangle_{1\le i\le n},w)$} and of \emph{aggregate  type $(\rho,w)$}.
Let a window adversary be of local type  $(\langle s_i\rangle_{1\le i\le n},w)$ with aggregate rate~$\rho$.
For this adversary, the \emph{local burstiness at node~$i$} is defined to be its share~$s_i$, and the \emph{aggregate burstiness} is defined to be the number $\delta=\sum_{i=1}^n s_i$, so that $\rho=\delta/w$.

The notion of aggregate type is similar to that of global type as considered in~\cite{ChlebusKR09,ChlebusKR-TALG}.

%: par: broadcast algorithms

\Paragraph{Broadcast algorithms.}

We consider communication algorithms that are distributed, in that they are event driven.
An algorithm is determined by specifying what constitutes a possible state of a node and what are the rules governing transitions between states.

When a property of a communication environment can be used by algorithms, to be executed in this environment, then we say that this property is \emph{known}.
The adversaries we consider are not known, which means that our algorithms are not tailored to any parameters of the adversary.

The letter~$n$ denotes the number of nodes attached to a channel, which we may call the \emph{size of the network}.
Each node has a unique integer name in the range between~$1$ and~$n$.
We assume that each node knows both its name and the number~$n$.

When multiple packets are pending transmission then some of them need to be parked at nodes waiting to be transmitted.
This occurs, for example, when multiple packets are injected simultaneously in the same node.
Each node has its packets stored in a private queue.
The capacity of such a queue is assumed to be unbounded, in that a queue can store any number of packets in principle.

A message may include at most one packet and possibly some control bits.
We do not assume an upper bound on the number of control bits.
For example, an algorithm may have nodes attach an ``over'' bit to a transmitted packet to indicate that the  node will not transmit in the next round.
A message consisting entirely of control bits is legitimate.
The contents of packets do not affect an execution of a broadcasting algorithm, in that packets are treated as abstract markers.

%: par: executions

\Paragraph{Executions of algorithms.}

All the nodes start executing a communication algorithm simultaneously.
In the course of an execution, the \emph{state of a node} is determined by the values of its private variables, which correspond to the variables used in the code of the algorithm.
A node's action in a round of an algorithm's execution consists of the following, in the order given: 
\begin{quote}
\begin{enumerate}
\item[\rm (1)]  
the node either transmits or pauses, as determined by its state,
\item[\rm (2)] 
the node receives a feedback from the channel, in the form of either hearing a message  or collision or silence,
\item[\rm (3)]  
new packets are injected into the node, if any,
\item[\rm (4)] 
a state transition occurs in the node.
\end{enumerate}
\end{quote}
An \emph{event} in a round is defined as a vector of nodes' actions in the round, indexed by the nodes.
An \emph{execution} of an algorithm is defined as a sequence of events  occurring in consecutive rounds.
All executions we consider are infinite.

A state transition of a node is determined by  the state at the end of the previous round, the packets injected in the round, and the feedback from the channel in the round.
Such a transition involves the following operations performed by a node.
Any injected packet is enqueued in the same round.
A successfully transmitted packet is discarded.
When the queue of a node is nonempty, including the packets injected in this round, then the node obtains the next packet to broadcast by dequeuing the queue.

%: par: performance

\Paragraph{Performance of algorithms.}

The number of packets stored in a queue in a given round is called the \emph{size of the queue} in the round. 
An algorithm is \emph{stable in an execution} when there is an upper bound on the sizes of all the queues in this execution.
An algorithm is \emph{stable against an adversary} if it is stable in each execution against this adversary.
This terminology is naturally extended for classes of adversaries, to mean stability against any adversary in the class.
Classes of adversaries are usually determined by properties of their types, like the magnitude of injection rates. 

The time spent by a packet waiting in a queue at a station is called this packet's \emph{delay}.
An upper bound on packet delay is called packet \emph{latency}.
Observe that the size of a queue in a station in a round is a lower bound on some packet's delay, because at most one packet is dequeued in a round.
It follows that an algorithm of bounded latency is stable, as latency is an upper bound on queue size.

If an algorithm is stable for a multiple access channel with the first-in first-out (FIFO) queuing discipline in each station, then it is also stable with any other queuing policy.
We use the FIFO queuing in each station in each of our algorithms.
The reasons are that this discipline minimizes latency and provides fairness even when packet latency is infinite, in the sense that no packet lingers ``at the bottom of a queue'' forever.

%: par: classes of algorithms

\Paragraph{Classes of algorithms.}

Natural subclasses of deterministic distributed broadcast algorithms in multiple access channels were defined  in~\cite{ChlebusKR09,ChlebusKR-TALG}.
We use the same classification, which we give next.
\begin{quote}
We call an algorithm \emph{adaptive} when control bits can be used in its messages; otherwise an algorithm is called \emph{non-adaptive}.

An algorithm is \emph{acknowledgment based} when a node without packets to transmit stays in an initial state and resets its state to an initial state immediately after a successful transmission.

An acknowledgment-based algorithm is  \emph{oblivious} if the decision whether a  packet is to be transmitted or not depends only on the station's name and which consecutive round it is devoted to broadcasting a currently handled packet by this station, counting rounds from the first one assigned to  this packet.
\end{quote}
A related terminology is used in the literature to contrast  algorithms that sense channel continuously, called ``full sensing,'' with  ones that do not sense  channel when not having pending packets to broadcast.
All algorithms we consider are ``full sensing,'' as our definition of broadcast algorithm stipulates that stations obtain feedback from the channel in each round and make state transitions in each round.
The term ``acknowledgment-based'' is sometimes used to mean that the algorithm has the property that when a station keeps attempting to transmit a packet and eventually succeeds, then the channel ``acknowledges'' this fact by its feedback, which triggers the station to reset its state to an initial one.

When  an oblivious algorithm is executed, then a node may ignore feedback from the channel, except for detecting whether the packet transmitted was heard on the channel, which serves as an ``acknowledgment'' from the channel.
Formally, an oblivious algorithm is determined by unbounded binary sequences assigned to nodes; 
such a sequence is called a \emph{transmission sequence}.
Different nodes, executing the same algorithm concurrently, may have different transmission sequences.
These different sequences are interpreted as follows.
If the $i$th bit of the transmission sequence of a node equals~$1$, then the node transmits the currently processed packet in the $i$th round, counting rounds from the first one when the packet was started to be processed, while a~$0$ as the $i$th bit  makes the node pause in the corresponding $i$th round.
Oblivious acknowledgment-based algorithms were considered in~\cite{ChlebusKR-TALG}.
In this paper, acknowledgment-based algorithms are defined by the property that a node begins execution in an initial state and resets its state to initial after a successful transmission; a station without pending packets loops in an initial state.

An algorithm is \emph{conflict free} if in any execution at most one node transmit in any round.
An algorithm that is not conflict free is called \emph{conflict prone}.
When an algorithm is conflict free then we assume that a channel is without collision detection.

%: par: structure of algorithms

\Paragraph{Algorithm design.}

Now we give an overview of design principles of our algorithms. 

It is a natural approach to have nodes work to discover the parameters defining the adversary at hand.
The nodes could gradually improve their estimates of the shares~$s_i$ and the aggregate  maximum burstiness $\delta=\sum_{i=1}^n s_i$.
The nodes would adjust the frequency of their individual transmissions to the knowledge gained in the process of discovering the adversary.
The aggregate burstiness~$\delta$ is a lower bound on the window~$w$ of an adversary.
Observe that the aggregate burstiness in the case of injection rate~$1$ equals the window size.

For a given window adversary of local injection rates, node~$i$ is \emph{active} when its share~$s_i$ is a positive number; otherwise, when $s_i=0$, the node~$i$ is \emph{passive}.
Node~$i$ has been \emph{discovered} in the course of an execution when a packet transmitted  by~$i$ has been heard on the channel.
A discovered node is clearly active.
In the context of a specific execution, a node that has not  been  discovered by a given round is called a \emph{candidate} in this round.
A passive node is doomed to be a candidate forever.
We describe two data structures used to schedule transmissions and have nodes discover their shares.

In one approach, each node~$i$ has a private array~$C_i$ of $n$ entries.
The entry $C_i[j]$ stores an estimate of the share $s_j$ of node~$j$, for $1\le j\le n$.
Every node~$i$ will modify the entry~$C_i[k]$ in a round in the same way; therefore we may drop the indices and refer to the entries of the arrays as~$C[j]$ rather than as~$C_i[j]$.
The array~$C$ is initialized to $C[j]=0$ for every~$1\le j\le n$.
The sum $\gamma=\sum_{i=1}^n C[i]$ is an estimate of the aggregate burstiness $\delta=\sum_{i=1}^n s_i$.

The nodes running an algorithm keep adjusting the estimates stored in the array~$C$.
When node~$i$ enters a state implying $s_i>C[i]$ then $i$ considers itself \emph{underestimated}.
Detecting underestimation is implemented by having every node keep track of the transmissions in the past $\gamma$ rounds.
When a node~$i$ detects that some $k>C[i]$ packets have been injected within the respective $\gamma$ consecutive rounds, where $k$ is maximum with this property so far, then $i$ decides  in this round that it is \emph{underestimated by the amount $k-C[i]$}.

Another approach to schedule transmissions and discover shares is to have each node~$i$ use a list~$D_i$ of bits.
Such a list~$D_i$ has its terms listed as a sequence $\langle d_i(1), d_i(2),\ldots,d_i(\ell)\rangle$. 
The length~$\ell$ may be modified, but it is the same in all nodes, and additionally the inequality $\ell\le w$ holds at all times.
The lists are initialized to be empty.

The number $\ell$ will be an estimate on the burstiness of the adversary, similarly as the number $\gamma$ is for the array~$C$.
The number of occurrences of $1$ in $D_i$ has the same meaning as $C[i]$ and so is interpreted as the current estimate of the share of~$i$.
The lists are maintained so as to satisfy the following invariant: 
for each $1\le j\le \ell$, when $d_i(j)$ has been determined for all~$i$, then $d_i(j)=1$ for precisely one~$i$,  and $d_k(j)=0$ for any other value~$k\ne i$ such that $1\le k\le n$. 
When lists $D$ are only used and arrays $C$ are not explicitly maintained, then $C[i]$ will denote the number of occurrences of~$1$ in~$D_i$.
We will use $\ell$ and $\gamma$ interchangeably when lists $D_i$ are used.

Our algorithms will have the property that $C[i]$ is increased at most by the amount that node~$i$ considers to be underestimated by.
This conservative mechanism of estimating the shares is safe, in that the shares are never overestimated,  as we show next.

%: lem: domination of array C

\begin{lemma}
\label{lem:domination-of-array-c}

If $C[i]$ is increased at most by the amount that node~$i$ considers to be underestimated by, then the inequality $C[i]\le s_i$ holds at all times, for every $1\le i\le n$.
\end{lemma}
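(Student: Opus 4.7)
The plan is to prove the bound $C[i] \le s_i$ by induction on the rounds of an execution, using the key structural fact that the running estimate $\gamma = \sum_{j=1}^n C[j]$ of the aggregate burstiness cannot outgrow the adversary's window $w$, as long as the individual estimates are themselves correct.

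For the base case, the array is initialized to zeros and the shares are non-negative, so $C[i] = 0 \le s_i$ for every $i$. For the inductive step, assume that at the start of some round the invariant $C[j] \le s_j$ holds for every node $j$. Summing over $j$ and using the defining constraint $\sum_{j=1}^n s_j \le w$ from the adversary's local type, we obtain
\[
\gamma \;=\; \sum_{j=1}^n C[j] \;\le\; \sum_{j=1}^n s_j \;\le\; w .
\]
Consequently any interval of $\gamma$ consecutive rounds is contained in some interval of $w$ consecutive rounds, so by the definition of a window adversary of local type $(\langle s_j\rangle, w)$, the adversary may inject at most $s_i$ packets into node $i$ during those $\gamma$ rounds.

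Now, node $i$ considers itself underestimated by the amount $k - C[i]$ only when $k > C[i]$ packets have been injected into $i$ within the past $\gamma$ rounds. By the preceding paragraph, such a $k$ must satisfy $k \le s_i$. Since by hypothesis $C[i]$ is increased by at most this amount $k - C[i]$ in the round, the updated value of $C[i]$ is at most $k \le s_i$, which preserves the invariant into the next round. This closes the induction and establishes the lemma for all rounds and all nodes.

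The main obstacle to watch out for is subtle rather than deep: one has to make sure the bookkeeping is consistent about when $\gamma$ is evaluated relative to the update of $C[i]$. The argument above uses $\gamma$ as it stands before the update, which is exactly the value governing the detection window in the algorithm's description, so the induction goes through cleanly. No further case analysis or quantitative estimate is needed.
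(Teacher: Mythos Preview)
Your proof is correct and follows essentially the same approach as the paper's: an induction on rounds showing the invariant $C[i]\le s_i$, using the implied bound $\gamma\le w$ to argue that at most $s_i$ packets are injected into node~$i$ in any $\gamma$-round window, so the amount of underestimation is at most $s_i-C[i]$. Your added remark about the timing of $\gamma$ relative to the update is a nice clarification but does not depart from the paper's argument.
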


\begin{proof} 
We show that the invariant ``the inequality $C[i]\le s_i$ holds for every $1\le i\le n$'' is preserved throughout the execution.
Initially $C[i]=0\le s_i$, for every $1\le i\le n$, so the invariant holds true in the first round.
Observe that if $C[i]\le s_i$, for every $1\le i\le n$, then also the inequalities
\[
\gamma=\sum_{i=1}^n C[i]\le \sum_{i=1}^n s_i=\delta\le w
\]
hold true.
The adversary can inject \emph{at most~$s_k$} packets into a node~$k$ during any segment of~$\gamma$ contiguous rounds, as the inequality $\gamma\le w$ means that any segment of $\gamma$ contiguous rounds could be extended to one of $w$ rounds of the execution.
A node~$k$ can detect to be underestimated by at most $s_k-C[k]$ in any round, while the invariant holds.
This makes the invariant still hold in the next round, because an update of $C[k]$ raises $C[k]$ to a value that is at most~$s_k$.
\end{proof} 

The algorithms we develop will be such that the assumptions of Lemma~\ref{lem:domination-of-array-c} hold true, and for them the inequalities $\gamma\le \delta\le w$ hold as well.
We will use arrays $C$ when each node knows the node for which an update of the current estimate of its share is to be performed.
It may happen that such knowledge is lacking: an underestimated node transmits successfully but the other nodes do not know which node transmitted. 
This is a scenario in which to use lists $D$, as a transmitting node can append~$1$ at the end of $D$ while every other node appends~$0$.

Our algorithms have nodes manipulate auxiliary lists, including the lists~$D$.
For each list, there is the \emph{main pointer} associated with this list, which points at an entry we call \emph{current} for the list, when the list is nonempty.
In a round, a main pointer either stays put or it is advanced by one position in the cyclic ordering of the entries on the list.

\section{Impossibilities and Lower Bounds}

\label{sec:impossibilities}

In this section, we present impossibility results for acknowledgment-based algorithms and lower bounds on packet latency.

%: par: impossibility results

\Paragraph{Impossibility results for acknowledgment-based algorithms.}

We begin by examining ac\-knowl\-edg\-ment-based algorithms. 
The action that a node performs when it begins processing a new packet is always the same, as it is determined by the initial state.
Such an action maybe either to transmit or to pause in the first round.

%: lemma: ack-based

\begin{lemma}
\label{lem:ack-based}

Let us consider an execution of an acknowledgment-based algorithm.
If there is a node that pauses in the first round after starting to process a new packet, then, for any number $\rho >\frac{1}{2}$, the algorithm is unstable for some adversary with aggregate rate~$\rho$.
\end{lemma}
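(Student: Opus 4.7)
The plan is to reduce the adversary to a single active node. By hypothesis, let $i$ be a node which, starting from its initial state, pauses in the first round upon beginning to process a freshly dequeued packet. I specify the adversary by setting $s_i = p$ and $s_j = 0$ for all $j \neq i$, with window size $w$ chosen so that $p/w = \rho$ (if $\rho$ is irrational, I instead work with an arbitrary rational in $(1/2, \rho]$, which does not weaken the claim). The adversary injects its full quota of $p$ packets at the start of each window $[1,w]$, $[w{+}1, 2w], \ldots$, which is clearly a legal injection pattern.

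Next I analyse the throughput of node $i$. Because every other node is passive and the algorithm is acknowledgment-based, every $j \neq i$ remains forever in its initial state and never transmits. Hence node $i$ shares the channel with no one: every transmission by $i$ succeeds and every non-transmission produces silence. Immediately after a successful transmission, node $i$ resets to its initial state and dequeues the next pending packet, if any. By the pause hypothesis, the round in which this new packet is first handled is then spent pausing, so $i$ consumes at least two rounds per successfully transmitted packet. Consequently node $i$ transmits at most $T/2$ packets in any segment of $T$ consecutive rounds.

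The final step is a standard flow-imbalance argument. Over $T$ rounds the adversary injects at least $\rho T - \cO(1)$ packets into node $i$, while $i$ drains at most $T/2$ packets; so the queue at $i$ contains at least $(\rho - 1/2) T - \cO(1)$ packets at time $T$, which grows without bound as $T \to \infty$, refuting stability.

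I do not expect any serious obstacles. The entire argument rests on the identification that acknowledgment-based algorithms make between the state at the start of processing any new packet and the initial state; this, combined with the pause-in-first-round hypothesis, pins the single-node throughput at $1/2$. The only minor bookkeeping concerns the integer constraint $s_i \leq w$ and the possibility that $\rho$ is irrational, both handled by choosing a suitable rational approximation.
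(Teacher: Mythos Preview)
Your proof is correct and follows essentially the same approach as the paper's: isolate the pausing node~$i$ by giving it all the adversary's rate, observe that packets can then be heard at most once every two rounds, and conclude by flow imbalance. The paper's proof is terser (it does not spell out the injection schedule, the passivity of the other nodes, or the rational-approximation detail), but the underlying idea is identical.
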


\begin{proof}
Suppose that some node $p$ pauses in the first round of processing  a new packet.
Consider an adversary who injects only into such a node $p$ as often as possible subject to an individual injection rate that is between $\frac{1}{2}$ and $\rho$.
This results in an execution in which a packet is heard not more often than every second round, while the aggregate rate is greater than $\frac{1}{2}$, so the queue at~$p$ grows unbounded.
\end{proof}

%: thm: ack based impossibility

\begin{theorem}
\label{thm:ack-based-2-nodes}

Any acknowledgment-based algorithm is unstable in the multiple-access channel with collision detection that consists of just two nodes, when executed against some window adversary of burstiness $2$ and aggregate injection rate~$1$.
\end{theorem}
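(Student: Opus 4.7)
The proof splits by the algorithm's behavior when a node begins processing a new packet.  If some node pauses in the first round after starting a new packet, Lemma~\ref{lem:ack-based} directly yields a witnessing adversary with aggregate rate $\rho=1$: taking $s_p=2$ for the pausing node $p$, $s_{3-p}=0$, and $w=2$ gives a window adversary of burstiness $2$ and aggregate rate $1$ under which $p$'s queue grows unboundedly, since $p$ delivers at most one packet every two rounds while being injected at rate one.  The remaining case, which I call Case~A, is that each node transmits in the first round after starting a new packet.

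For Case~A I would use the adversary $\cA$ with $s_1=s_2=1$ and $w=2$, which injects one packet into each node at the start of every odd round; this has burstiness $2$ and aggregate rate $1$.  The central structural observation is that whenever both nodes are simultaneously in the state ``initial, with a freshly dequeued packet'' at the start of a round, Case~A forces both to transmit and the round is a collision.  Round~$1$ is automatically of this form, and more generally any window that ends with both nodes in the initial state forces a collision in the first round of the next window (because both nodes are then fresh after the simultaneous injection).  I would partition rounds into two-round windows and classify each full (two-success) window as ``both by node $i$'' or ``alternating.''  Two structural facts then drive the counting: (i) each alternating window leaves both nodes in the initial state at its end, so by the observation the next window's first round is a collision and that next window is non-full; (ii) each both-by-$i$ window has the other node paused throughout, so its queue grows by one, and it requires the previous window to have left $q_i\ge 1$.

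Under the assumption of stability, a queue-change bookkeeping over all windows forces $c_{NF}=O(1)$, and by (i) $c_{\text{alt}}\le c_{NF}$, so the $T/2-O(1)$ remaining windows are of both-by-$i$ type, balanced as $c_{\text{both-by-}1}\approx c_{\text{both-by-}2}\approx T/4$.  A final Case~A argument then yields a contradiction: a both-by-$1$ window cannot be immediately followed by a both-by-$2$ window (after its two successes node~$1$ is reset and by Case~A must transmit in the next round, so node~$2$ cannot succeed alone there), and a run of consecutive both-by-$i$ windows depletes $q_i$ by one per window, so any such run has length at most the queue bound $Q=O(1)$.  Hence the $\Omega(T)$ both-by-$i$ windows must be organised into $\Omega(T)$ short runs separated by non-both-by-$i$ windows, while the bookkeeping allows only $O(1)$ such separators.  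The main obstacle is the rigorous formulation of the ``run-length $\le Q$'' argument and its interaction with the alternating/non-full bookkeeping; I expect the cleanest write-up uses a potential function combining $q_1+q_2$ with a phase counter that ticks whenever the current run changes type, showing that the total phase change must be both $\Omega(T)$ and $O(1)$, which is the contradiction.
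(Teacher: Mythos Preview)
Your split into the pausing case (via Lemma~\ref{lem:ack-based}) and Case~A is the same as the paper's, and your choice of adversary in Case~A (one packet per node in each odd round, $w=2$) is exactly the paper's. From there, however, the two arguments diverge.

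The paper does \emph{not} argue by contradiction through window-counting. Instead it directly constructs an infinite sequence $t_1<t_2<\cdots$ of void rounds: $t_1=1$ is silent (no packets yet under the model's timing, where injection follows transmission), $t_2=2$ is a collision (both nodes just dequeued and, by Case~A, transmit). For the inductive step, starting from an even $t_i$, one follows whichever node~$p$ transmits alone until $p$ first runs out of packets at some round~$t'$; a parity check forces $t'$ to be odd (else $p$ would have received a packet at $t'-1$). If $q$ is silent at $t'$, done; if $q$ transmits at $t'$, then both nodes are simultaneously reset with a fresh packet (injection happens in the odd round $t'$), so $t'+1$ is a collision. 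This yields $t_{i+1}$ in every case and finishes the proof in a few lines.

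Your counting route is workable, but two remarks are worth making. First, your ``alternating'' case is in fact vacuous: if node~$1$ succeeds in round $2k-1$, then after reset and the odd-round injection it has a fresh packet and, by Case~A, transmits in round~$2k$; so node~$2$ cannot succeed alone in~$2k$ (and symmetrically). All full windows are therefore both-by-$i$, which removes the bookkeeping you were worried about. Second, the remaining skeleton --- $c_{NF}\le 2Q$ from stability, both-by-$1$ cannot be immediately followed by both-by-$2$, and each maximal both-by-$i$ run has length at most $Q$ since $q_i$ drops by one per window --- already gives $c_{NF}\ge c_{\text{both-by-}1}/Q=\Omega(T)$, a clean contradiction without any potential function. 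So your approach can be made rigorous, but it is considerably more elaborate than the paper's direct construction, and your phrasing ``at the start of every odd round'' conflicts with the paper's round structure (injection is step~(3), after transmission), which is why round~$1$ is silent rather than a collision.
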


\begin{proof} 
Consider an ac\-knowl\-edg\-ment-based algorithm for two nodes $p$ and~$q$.
Suppose, to arrive at a contradiction, that the algorithm is stable for the aggregate injection rate~$1$.
This implies, by Lemma~\ref{lem:ack-based}, that a node transmits a new packet immediately. 

We define an execution of the algorithm with an infinite sequence of rounds  $t_1, t_2,\ldots$ determined so that there are at least $i$ void rounds by round~$t_i$.
(Recall that a round is void when no packet is heard in this round.)
The adversary will inject two packets in each odd-numbered round, a packet per node, and it will inject no packets in even-numbered rounds. 
This means that each node has its individual injection rate equal to~$\frac{1}{2}$.
The aggregate injection rate of the adversary is~$1$, and so there will be at least $i$ packets in queues in round~$t_i$.
Next we stipulate how such a scenario can occur.

Let us set $t_1$ to be the first round.
This round is silent as the nodes did not have any packets in round zero.
A collision occurs in the second round.
This is because each node got a new packet in the first round, so it transmitted it immediately. 
Let us define $t_2$ to be the second round. 
As our construction continues, each round number~$t_i$,  for $i>1$, will be even.
The adversary will not inject packets in these specific rounds $t_i$, as packets will not be injected in any even-numbered round, as a general rule.

Suppose the execution has been determined through an even-numbered round~$t_i$.
If $t_i+1$ is void then define $t_{i+1}=t_i+2$.
Otherwise some node, say $p$, transmits in round~$t_i+1$.
The node~$p$ will continue transmitting for as long as it has packets, because it transmits a new packet immediately.
If in one such a round~$t$ the node~$q$ transmits concurrently with~$p$, then this results in a collision and we define $t_{i+1}$ to be $t$ if $t$ is even or $t+1$ otherwise.
Let us suppose that this is not the case, so $p$ keeps transmitting alone. 
The node~$p$ will not have a packet to transmit in some round after $t_i+1$, since its injection rate is $\frac{1}{2}$ and so smaller than~$1$; let $t'>t_i+1$ be the first such a round.
Observe that $t'$ has to be an odd number, as otherwise a new packet would have got injected into~$p$ in the round $t'-1$ resulting in $p$ having a packet to transmit in round~$t'$.
If $q$ does not transmit in round~$t'$, then this round~$t'$ is silent and we are done; in this case we define $t_{i+1}$ to be $t'+1$, as such a round number is to be even.
Otherwise, the node~$q$ transmits in round~$t'$ successfully.
Simultaneously  both nodes obtain new packets in round~$t'$, so each has at least one available packet at the end of this round.
Each node transmits in round $t'+1$, as for each of them it is a first round of processing a new packet.
This is because $p$ did not have a packet in round~$t'$ and $q$ transmitted in round~$t'$.
Define $t_{i+1}$ to be the void round $t'+1$.
This completes the inductive construction of the sequence~$t_i$, and by this produces a contradiction with the assumption that the algorithm is stable.
\end{proof} 

Next we investigate how large could an injection rate be, as a function of the number of nodes attached to a multiple access channel, that can be handled in a stable manner by an acknowledgment-based algorithm.
It was shown by Chlebus et al.~\cite{ChlebusKR-TALG} (Theorem 5.1) that an oblivious acknowledgment-based algorithm cannot be stable when the global injection rate is at least as large as $\frac{3}{1+\lg n}$, for $n\ge 4$ nodes, where $\lg n$ denotes the binary logarithm of~$n$.
We show a related result for individual injection rates and acknowledgment-based algorithms that are not necessarily oblivious.
The following theorem was inspired by the related result in~\cite{ChlebusKR-TALG} and its proof is similar to that given in~\cite{ChlebusKR-TALG}.

%: thm: ack-based general impossibility

\begin{theorem}
\label{thm:ack-based-n-nodes}

If an acknowledgment-based algorithm is executed by $n\ge 4$ nodes on a multiple-access channel with collision detection, then the system is unstable against an adversary for which only two nodes have positive shares, one such share is equal to~$1$ and the other to~$2$, and the window size is larger than $\lceil \lg n\rceil$. 
\end{theorem}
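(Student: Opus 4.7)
The plan is to assume stability against every such adversary and reach a contradiction via a pigeonhole argument on the $n$ nodes combined with a collision-locking adversary construction.

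First I would establish a uniform bound on $f(i)$, defined as the round in which node $i$ first transmits after receiving a fresh packet, under the scenario that $i$ is the only node with pending packets. If $f(i)>w/s$ for any candidate share $s\in\{1,2\}$ that we might wish to assign to $i$, then the adversary that designates $i$ as the node with share $s$ and injects its entire allowance into $i$ at the start of each window forces an injection rate $s/w>1/f(i)$, while node $i$ cannot serve its own queue faster than one packet per $f(i)$ rounds. Arguing as in Lemma~\ref{lem:ack-based}, the queue at $i$ grows without bound, contradicting stability. Hence we may assume $f(i)\le\lfloor w/2\rfloor$ for every $i$, and a fortiori $f(i)\in\{1,\ldots,w\}$.

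Second I would invoke pigeonhole. Since $n\ge 4$ implies $n>w=1+\lceil\lg n\rceil$, the values $f(1),\ldots,f(n)$ cannot all be distinct, so there already exist two nodes $p\ne q$ with $f(p)=f(q)=:f$. For the non-oblivious case, the finer version of this step attaches to each node $i$ its full transmit/pause signature over the first $\lceil\lg n\rceil-1$ rounds of processing a fresh packet along a prescribed feedback trajectory (silences before the first transmission and a collision at round $f$). The number of such signatures is at most $2^{\lceil\lg n\rceil-1}<n$, so a second pigeonhole yields two nodes $p\ne q$ whose signatures agree on that whole prefix.

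Third I would designate $p$ and $q$ as the two active nodes of the adversary, giving $p$ share $1$ and $q$ share $2$, and construct the injection pattern as follows. Within each window, inject one packet into $p$ at round~$1$ and two packets into $q$ (at round~$1$ and at round~$f$). Because $p$ and $q$ share matching signatures and observe identical channel feedback throughout the window, their deterministic trajectories remain in lockstep: they pause at the same rounds, both transmit at round $f$, and collide. No packet is heard during the window, yet the adversary injected three new packets, so the combined queue length at $p$ and $q$ grows by three each window, contradicting stability.

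The principal obstacle I expect is verifying that this lockstep genuinely persists through the full window, despite the algorithm being non-oblivious and $p,q$ having distinct IDs that could otherwise split their trajectories. The tight choice $w=1+\lceil\lg n\rceil$ is calibrated so that $\lceil\lg n\rceil-1$ bits of signature both suffice for the pigeonhole match and are enough to drive the lockstep through a window's worth of rounds; confirming that the three-packet budget of the adversary is consistent with sustaining this pattern indefinitely across consecutive windows is the technical core of the argument.
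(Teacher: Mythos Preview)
Your proposal shares the paper's pigeonhole step---finding two nodes whose transmit/pause signatures agree on the first $\lceil\lg n\rceil-1$ rounds under the feedback trajectory ``collision whenever the node transmits, silence otherwise''---but the conclusion you draw from it is too strong, and what you flag as ``the principal obstacle'' is not a verification step but an actual failure of the plan. You want \emph{no} packet heard during a window, which requires $p$ and $q$ to stay in lockstep for all $w=\lceil\lg n\rceil+1$ rounds; the pigeonhole only pins down $\lceil\lg n\rceil-1$ of them. At the first round where the signatures diverge, exactly one of $p,q$ transmits alone and its packet \emph{is} heard, so the ``zero successes, three injections per window'' accounting collapses. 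Injecting a second packet into $q$ at round~$f$ does nothing here, since $q$ is still processing its first (unheard) packet; and you cannot lengthen the signature to cover the full window, because the pigeonhole would then require $n>2^{w}\ge 2n$.

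The paper does not try to prevent the divergence---it exploits it. If the full infinite signatures $s(p)$ and $s(q)$ coincide, a single simultaneous injection into each node already gives perpetual lockstep and instability. Otherwise let $k\ge\lceil\lg n\rceil$ be the first disagreement, say with $p$ transmitting and $q$ pausing at round~$k$, and let $j>k$ be the next position at which $s(q)$ has a~$1$. Assign $p$ share~$1$ and $q$ share~$2$, set the window to~$w=j$, and inject the full shares at round~$0$. Over rounds $1,\ldots,j$ exactly two packets are heard---$p$'s sole packet at round~$k$ (after which $p$ resets and sits idle) and one of $q$'s packets at round~$j$---while three were injected, so $q$'s queue grows by one. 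At round~$j$ both nodes are again freshly starting on a packet, so re-injecting the shares reproduces the same pattern; $q$'s backlog increases by one every period. The proof therefore hinges on hearing two and injecting three per cycle, not on hearing zero.
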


\begin{proof}
Let $\cA$ be a specific acknowledgment-based algorithm for the $n$ nodes. 
For each node $p$, consider an execution of algorithm $\cA$ when $p$ starts to work on a new packet.
Let us consider an execution, understood as an experiment, such that when $p$ transmits then $p$ hears a collision and when $p$ does not transmit then the round is silent.
The purpose of this experiment is to determine a sequence of bits $s(p)= (b_1, b_2,\ldots)$ such that $b_i=1$ when $p$ transmits in the $i$th round and $b_i=0$ when $p$ pauses in the $i$th round in this  execution, where we count  rounds from the first one when $p$ begins to process the packet.
Let $s(p,i)$ be this sequence truncated to the first $i$ terms, for $i\ge 1$.
There exist two nodes, $p_1$ and $p_2$, for which $s(p_1,\lceil \lg n\rceil - 1)=s(p_2,\lceil \lg n\rceil - 1)$ by the pigeonhole principle, because $\lceil \lg n\rceil-1\ge 1$ for $n\ge 4$ and  $2^{\lceil \lg n\rceil-1}=2^{\lceil \lg n\rceil}/2 < n$.

If the two sequences $s(p_1)$ and $s(p_2)$ are identical then it is sufficient for the adversary to simultaneously inject one packet into $p_1$ and another into $p_2$ and these packets will never be heard on the channel.
Then instability follows, because packets subsequently injected into $p_1$ and $p_2$ will need to be queued and none of them will ever be heard.
Otherwise, let $k\ge \lceil \lg n\rceil$ be the first position in which the sequences $s(p_1)$ and $s(p_2)$ differ.
Without loss of generality, suppose that the $k$th term of $s(p_1)$ is~$1$ and the corresponding term of~$s(p_2)$ is~$0$.
Let~$j$ be the smallest position of the sequence $s(p_2)$ such that  $j>k$ and there is~$1$ at this position in~$s(p_2)$.
We determine an adversary as follows. 
The window size $w=j$ and node $p_1$ has share~$1$ and node $p_2$ has share~$2$, which is possible for $n\ge 4$ because then $j\ge 3$.

This adversary may inject the packets as follows.
At round number~$0$, the adversary injects into each node the number of packets equal to that  node's share.
This is followed by $w$ rounds so that two packets are successfully transmitted, one in round $k$ by node $p_1$ and the other in round $w=j$ by node~$p_2$, while node~$p_2$ still has one packet.
At round $w=j$, the adversary injects  into each node the number of packets equal to this node's share.
This makes the behavior of the channel during the rounds $w+1$ through $2w$ be the same as during the rounds from~$1$ through~$w$, with the only difference that at round~$2w$ node~$p_2$ has two packets pending transmissions.
This pattern can be iterated forever, with the result that at the round number~$\ell w$ node~$p_2$ has $\ell $ pending packets.
\end{proof}

Due to the inherent limitations of acknowledgment-based algorithms, as expressed in Theorems~\ref{thm:ack-based-2-nodes} and~\ref{thm:ack-based-n-nodes}, the algorithms we develop in this paper are not acknowledgment based.

%: par: lower bounds

\Paragraph{Lower bounds.}

Next we present two lower bounds on packet latency.
We begin by showing that an algorithm with performance close to optimal needs to have bounds $\Omega(n+w)$ or $\Omega\bigl(w\,\text{polylog }n)$ on packet latency.

%: lemma: two general lower bounds

\begin{theorem}
\label{thm:general-lower-bound}

For any broadcast algorithm for a channel with $n$ nodes and for an adversary of window~$w$  and aggregate injection rate~$1$,  packet latency provided by the algorithm  in some execution is $\Omega(w \max (1,\log_w n))$, when $w\le n$, and it is $\Omega(w)$, when $w>n$.
\end{theorem}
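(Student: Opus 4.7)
The plan is to prove the bound separately in the two regimes $w > n$ and $w \le n$ indicated by the two branches of the statement.

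For the easier regime $w > n$, I would establish $\Omega(w)$ and $\Omega(n)$ via two ``dump-at-round-zero'' adversaries, each individually legitimate under the window-$w$ constraint. The first sets $s_p = w$ on a single node $p$ with $s_i = 0$ elsewhere and injects all $w$ packets into $p$ at round~$0$; since at most one packet can be heard per round, some packet must wait at least $w-1$ rounds. The second is legitimate precisely because $n < w$: it sets $s_i = 1$ on every node and injects one packet per node at round $0$, so that serialising $n$ packets through the channel forces some packet to wait at least $n-1$ rounds. Combining, some adversary achieves latency $\Omega(\max(w,n)) = \Omega(w+n)$.

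For the substantial regime $w \le n$, the target is $\Omega(w \log_w n)$, and the plan is a phase-based adaptive adversary. I divide time into phases of $w$ consecutive rounds and maintain a ``live set'' $U \subseteq [n]$ of candidate active nodes, with the invariant that the feedback produced so far is consistent with the adversary having already committed to some $w$-subset $S \subseteq U$, and having injected one packet per node of $S$ at round~$0$. Initially $|U| = n$. In each phase I intend to force two properties simultaneously: at most $O(1)$ packets are heard during the phase, and $|U|$ shrinks by at most a factor of~$w$. Once $|U|$ drops below~$w$ the adversary is forced, but as long as $|U| \ge w$ a still-unheard packet can be kept queued. Since $|U|$ can undergo at most $\log_w(n/w) = \Theta(\log_w n)$ such shrinkages, the construction sustains an unheard packet for $\Omega(w \log_w n)$ rounds, yielding the claimed latency.

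The mechanism driving the invariant goes round-by-round inside a phase. The algorithm is deterministic, so conditional on the feedback history its transmission set $T_t$ in round $t$ is fixed. If $|T_t \cap U|$ is small (say below $|U|/w$), the adversary chooses to commit to ``silence'' in that round, restricting $S \subseteq U \setminus T_t$ and shrinking $U$ by at most a $(1-1/w)$ factor. If $|T_t \cap U|$ is large, the adversary forces a collision by keeping two committed packets inside $T_t$, which costs $U$ a little slack but still lets $U$ accommodate the required $w$-subset. Aggregating carefully over the $w$ rounds of the phase yields total shrinkage by at most a factor of~$w$ per phase while blocking all but $O(1)$ transmissions.

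The main obstacle is executing the two invariants simultaneously without the adversary ``cheating'' about an already-committed injection. Since the adversary must commit to the injection pattern at round~$0$, the formalism has to interpret $U$ not as the set of literal live nodes but as the set of $w$-subsets $S \in \binom{[n]}{w}$ still consistent with the algorithm's feedback, and the factor-$w$ shrinkage has to be read at the level of this set system. Making both the ``few-heard'' and ``slow-shrinkage'' invariants robust against the algorithm's own adaptivity — in particular, against an adaptive algorithm that uses control bits to probe the live set — is the technical heart of the argument, and is where one must use both the round budget $w$ per phase and the bound $\sum_i s_i \le w$ on the adversary's injection budget per window.
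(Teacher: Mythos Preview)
Your treatment of the $w>n$ regime is fine, though slightly redundant: since $w>n$ implies $w+n<2w$, your first adversary alone already gives $\Omega(w)=\Omega(w+n)$. The paper achieves $w+n-1$ with a single adversary that places all $w$ packets into the node that is last ``given a chance'' in the empty-injection run; your two-adversary split is a valid alternative.

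For $w\le n$, the paper's proof is a two-line reduction: have the adversary inject $w$ packets at round~$0$, one per node, into some $w$-subset of the stations; this turns the dynamic problem into the static conflict-resolution problem, and the $\Omega\bigl(k\,\tfrac{\log n}{\log k}\bigr)$ lower bound of Greenberg and Winograd applies directly with $k=w$. You instead try to re-derive that bound from scratch, and the mechanism you describe has a real gap.

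Concretely, your silence/collision rule can be forced into infeasible commitments. Let the algorithm, in a phase, partition $U$ into $w$ disjoint blocks $A_1,\dots,A_w$ of size $|U|/w$ and set $T_t=A_t$. Every round then meets your ``large'' threshold, so your rule forces a collision in each, committing to $|S\cap A_t|\ge 2$ for all~$t$; summing over disjoint blocks gives $|S|\ge 2w$, contradicting $|S|=w$. Hence the adversary must sometimes answer silence or success even in ``large'' rounds, and the bookkeeping can no longer be carried on a single node-level live set~$U$. Exactly as you concede in your last paragraph, one must track the family of surviving $w$-subsets consistent with the feedback --- but that \emph{is} the Greenberg--Winograd argument, and you have not carried it out; ``aggregating carefully'' is a placeholder for the entire proof. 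The cleanest route is the paper's: invoke their result directly. If you insist on a self-contained proof, you need a configuration-counting or decision-tree argument rather than the node-level dichotomy you sketch.
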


\begin{proof} 
The adversary can inject $w$ packets in one round because the aggregate injection rate is~$1$ so the burstiness equals~$w$.

For the case $w\le n$, we consider adversaries for which each node has a share that is either~$0$ or~$1$.
Greenberg and Winograd~\cite{GreenbergW-JACM85} considered a static version of the broadcast problem, in which $k$ packets are located initially among $k$ nodes, for some $k \le n$, at most one packet per node,  and the goal is to have all of them heard on the channel.
They showed that for any algorithm it takes $\Omega\bigl(k\,\frac{\log n}{\log k}\bigr)$ time to achieve this goal in some execution of the algorithm.
Algorithms that we use handle dynamic broadcast, but they can be applied to the static version.
Namely, a translation to the static version of broadcast is as follows: let the adversary inject $w$ packets in the first round, at most one packet per node, the algorithm needs to broadcast only these packets.
This gives the bound $\Omega\bigl(w\,\frac{\log n}{\log w}\bigr)=\Omega(w\max (1,\log_w n))$.

Next consider the case $w>n$.
Let the adversary inject $w$ packets in the first round and then execute the algorithm.
It will take at least $w$  rounds to hear these packets.
\end{proof}

The next observation is that to attain latency $o(nw)$, the algorithm has to be conflict prone.

%: thm: conflict free quadratic lower bound

\begin{theorem}
\label{thm:collision-free-quadratic-lower-bound}

For any conflict free algorithm, a system of sufficiently many nodes~$n$, and any sufficiently large window~$w$, there is an execution in which a packet is delayed by $\Omega(nw)$ rounds against a window adversary with window $w$ and an aggregate injection rate smaller than~$1$.
\end{theorem}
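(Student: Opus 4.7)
The plan is a pigeonhole argument on $\cA$'s scheduling pattern. Fix some $\rho$ slightly below $1$, say $\rho=1/2$, and consider any conflict-free algorithm $\cA$. Because $\cA$ is deterministic, one can define its \emph{silent-history schedule} $\sigma_0(1),\sigma_0(2),\ldots$, namely the sequence of nodes that $\cA$ would designate to transmit in the hypothetical execution in which no packet is ever heard on the channel. Since $\cA$ is conflict-free, $\sigma_0$ has at most one entry per round, so over the first $T=\lceil\rho nw\rceil-1$ rounds it contains at most $T$ designations; by the pigeonhole principle some node $v$ is designated strictly fewer than $\rho w$ times in $\sigma_0(1),\ldots,\sigma_0(T)$.

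Given such a $v$, the adversary's strategy is to inject $\rho w$ packets into $v$ at round $1$ and nothing elsewhere; this respects the burstiness $\rho w<w$ and the aggregate rate $\rho<1$. While the actual execution's channel history remains all silence, the algorithm evolves exactly as in the silent-history case, because only $v$ holds packets and so the history can become non-silent only if $v$ itself is designated and transmits. Therefore, as long as the pattern of silent rounds persists, $v$'s transmissions coincide with the designations of $v$ in $\sigma_0$, and the pigeonhole bound guarantees that there are fewer than $\rho w$ of these within the first $T$ rounds. Consequently, at least one of the $\rho w$ injected packets is still in $v$'s queue at round $T$, and since it was injected at round~$1$ its delay is at least $T-1=\Omega(nw)$.

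The main obstacle is that once $v$ has transmitted even a single packet the channel history is no longer all-silent, and an adaptive conflict-free algorithm may then deviate from $\sigma_0$ and start designating $v$ more often in order to drain its queue faster. To close this gap, the plan is to augment the adversary's play with a \emph{distraction} pattern: whenever the algorithm starts devoting consecutive rounds to $v$, the adversary instead injects a few packets into nodes that are currently being neglected, so that any acceleration of the drain at $v$ is compensated by growing backlog elsewhere. Because conflict-freeness permits only one transmission per round, the algorithm cannot simultaneously clear $v$'s queue and keep up with the distractions; a potential-function argument on the total queued packets should then show that after $\Theta(nw)$ rounds the system contains $\Omega(nw)$ packets in total, so by the FIFO discipline at the node holding the oldest of them some packet has been waiting for $\Omega(nw)$ rounds.
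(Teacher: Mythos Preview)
Your proposal has a genuine gap that you yourself identify but do not actually close. The silent-history pigeonhole argument in the first paragraph is fine as far as it goes, but as you note, the moment $v$ transmits even once the algorithm may abandon~$\sigma_0$ entirely and devote every subsequent round to~$v$; a one-shot burst of $\rho w$ packets is then drained in $O(n+w)$ rounds, not $\Omega(nw)$. The second paragraph is only a plan: the ``distraction pattern'' is never specified, and the asserted conclusion that after $\Theta(nw)$ rounds the system holds $\Omega(nw)$ packets is not argued. In fact it is hard to see how it could follow from what you describe. With aggregate rate $\rho<1$, over any interval of length~$L$ the adversary injects at most $\rho L+O(w)$ packets while the algorithm can hear up to~$L$; a crude potential bound on total queued packets gives nothing. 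You would need to exhibit a concrete injection pattern that forces the algorithm to waste a constant fraction of rounds, and your sketch does not do this.

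The paper's proof supplies exactly the forcing mechanism your sketch lacks, and it is quite different from ``distracting'' the algorithm away from~$v$. The adversary designates one \emph{heavy} node with share $w-2$ and keeps it saturated, so that any stable schedule must give it all but $O(1)$ rounds out of every~$w$. This leaves only about $n-1$ rounds per segment of length $(n-1)w/2$ for the remaining $n-1$ ``maverick'' nodes. One then runs the algorithm against the heavy node alone, observes how mavericks are scheduled in each segment, and argues by cases: either some maverick is omitted from a segment (inject one packet there and it waits the whole segment), or mavericks consume enough rounds that the heavy node's queue grows without bound, or every maverick appears exactly once in some segment and a half-segment argument finishes. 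The point is that the heavy node, not the victim, is the ``distraction,'' and its continuous demand is what makes the $\Omega(nw)$ bound unavoidable; a finite burst, as in your approach, cannot play this role.
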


\begin{proof} 
Let us consider a specific conflict-free algorithm and its execution.
The adversary is specified in two stages.
The first stage is by declaring some node to be \emph{heavy} and the remaining ones to be \emph{mavericks}.
The share of the heavy node is~$w-2$, and one of the mavericks has its share equal to~$1$, so that the aggregate injection rate is $\frac{w-1}{w}= 1-\frac{1}{w}<1$.
At this point, the heavy node may be ``known to the algorithm'' while the maverick with a positive share will be declared in the second stage, some time later in the execution, depending on the algorithm's actions.

Consider an execution $\cE_1$ in which the adversary injects only into the heavy node, with full capacity of $w-2$ packets per $w$ consecutive rounds, and does not inject into any maverick node at all.
Let us partition any execution into disjoint segments of consecutive $\frac{(n-1)w}{2}$ rounds.
The adversary injects $\frac{(n-1)(w-2)}{2}$ packets into the heavy node during each segment.
This leaves a room for $n-1$  rounds during a segment of~$\cE_1$ that are available to locate the maverick with a positive share.
We argue that an algorithm cannot locate such a maverick without incurring $\Omega(n w)$ delay.

We specify an execution~$\cE_2$ which has the same prefix as~$\cE_1$ until the first injection into a maverick.
To this end, we consider the consecutive segments of $\cE_1$ one by one, starting from the first segment; let $S$ denote a current segment of~$\cE_1$.
We proceed through a sequence of cases, depending on how many mavericks are scheduled to transmit in~$S$.

If fewer than $n-1$ mavericks are scheduled to transmit in~$S$, then some maverick is not scheduled to transmit in~$S$ at all.
We switch to $\cE_2$ by having the adversary inject a packet into a maverick that is not scheduled to transmit in~$S$, the injection occurring at the round just before segment~$S$ is to begin.
This packet waits at least the duration of a segment, which is $\Omega(n w)$.
If such a segment~$S$ exists then the argument is completed.

Let us suppose that, alternatively, every maverick is scheduled to transmit at least once during every segment of~$\cE_1$.
If some maverick is scheduled to transmit more than once in~$S$, then the number of packets in the heavy node increases during this segment~$S$, because at least one time slot contributes to delaying the heavy node in unloading its packets; in such a case we proceed to the next segment of~$\cE_1$.
If only such segments are in~$\cE_1$, then the algorithm is not stable and packet delays are arbitrarily large.
Otherwise, suppose that there is a segment~$S$ during which every maverick is scheduled to transmit exactly once.
Let us partition $S$ into first and second halves, each of  $\frac{(n-1)w}{4}$ rounds.
We consider two sub cases next.

The first sub-case occurs when the last maverick to transmit in~$S$ is scheduled to transmit in the first half of~$S$.
We switch to~$\cE_2$ by having the adversary inject into this last maverick just after its scheduled transmission in~$S$.
This packet needs to wait at least for the duration of the second half, which is~$\Omega(nw)$.

The other sub-case occurs when the last maverick to transmit in~$S$ is scheduled to transmit in the second half of~$S$.
We switch to~$\cE_2$ by having the adversary inject into this last maverick just before segment~$S$ is to begin.
This packet needs to wait at least for the duration of the first half, which is~$\Omega(nw)$.

The considered cases and sub-cases exhaust all the possibilities, and we conclude that a packet's delay $\Omega(nw)$ may occur for the considered adversary and injection rate $1-\frac{1}{w}$.
\end{proof} 

In Section~\ref{sec:non-adaptive-no-collision-detection} we give a collision-free non-adaptive  algorithm for channels without collision detection that has $\cO(nw)$ packet latency.
The question if packet latency has to be $\Omega(nw)$ when a collision-free non-adaptive  algorithm is executed on channels without collision detection is left open.

\section{Two Algorithms of Small Latency}

\label{sec:two-algorithms}

In this section we present algorithms with packet latencies that are close to optimal against an adversary of aggregate injection rate~$1$.
The algorithms are developed for the two scenarios when either (1) collision detection \emph{is} available, in which case we do not use control bits in messages, or (2) collision detection is \emph{not} available, in which case we do use control bits in messages.

\subsection{A non-adaptive algorithm with collision detection}

We develop a non-adaptive algorithm \textsc{Non-A\-dapt\-ive-Dis\-cov\-er-Shares} which uses collision detection to bound   latency.
Algorithm \textsc{Non-A\-dapt\-ive-Dis\-cov\-er-Shares} in turn uses two algorithms \textsc{Search-Col\-li\-sion-Update} and \textsc{Cycle-Col\-li\-sion-Update} executed in sequence. 
Algorithm \textsc{Search-Col\-li\-sion-Update} is executed first and next we switch to \textsc{Cycle-Col\-li\-sion-Update}  after a certain number of collisions is reached in an execution of \textsc{Search-Col\-li\-sion-Update}. 

A node $i$ uses the list of bits~$D_i$ to implement its estimate of shares.
The nodes manipulate the lists according to the principles given at the end of Section~\ref{sec:technical}, we recall them next.
Each such a list~$D$ is initially empty. 
There is a (main) pointer associated with each list, which points at the current entry.
The pointer is set to the first position on a list, when the first entry is inserted into the list, and after that it is advanced in the circular order of the list  by one position in each round.

The lists $D$ are used to schedule transmissions as follows: a node~$i$  transmits in a round when its current entry in~$D_i$  is~$1$ and otherwise it pauses.
A node $i$ that upgrades its share appends an entry with~$1$ to the end of its list $D_i$, while at the same time  all the other nodes append $0$'s to their lists. 
This provides conflict freeness, because the following invariants about lists $D_i$, for $1\le i\le n$ are maintained in each round:
\begin{enumerate}
\item[(i)]
the lists at the nodes are all of the same length,
\item[(ii)]
the main pointers of all lists are on the same position in their respective lists,
\item[(iii)]
at most one main pointer indicates a current entry with~$1$ while all the remaining pointers indicate at~$0$'s.
\end{enumerate}

Each of algorithms \textsc{Search-Col\-li\-sion-Update} and \textsc{Cycle-Col\-li\-sion-Update} has two threads, the main one and the update one, but they are implemented  and cooperate differently.

%: par: algorithm SCU

\Paragraph{Algorithm \textsc{Search-Col\-li\-sion-Update}.}

Now we explain in detail how algorithm \textsc{Search-Col\-li\-sion-Update} works.
The algorithm starts by invoking the main thread.
The main thread occasionally calls the update thread, when it is needed.
The update thread uses a binary search among the nodes to locate processors that consider themselves underestimated.
Next we describe the threads in detail.

The main thread uses the lists~$D$.
This prevents conflicts for access to the channel, as long as there are no transmissions beyond those schedule by the lists~$D$.
While the main thread is executed, some nodes may detect that there are underestimated.
Any such an underestimated node becomes \emph{persistent}, which means it will work to create a collision in order to gain an opportunity to upgrade its share. 
A persistent node transmits in each round, even when it is not scheduled to transmit by its array~$D$, as long as it has packets.
If a collision occurs, then the update thread is invoked, while the main thread pauses.
This means that the main pointers associated with the lists~$D$ are not advanced and persistent nodes do not transmit, except as participants of the binary search performed by the update thread. 
The main thread resumes after the update thread terminates.

The update thread operates as follows.
It performs search over the range of nodes by referring to intervals of names of nodes. 
In each round one interval is distinguished as \emph{current}.
In a round, each node in the current interval that wants to upgrade its share transmits a pending packet.
The search starts with the interval $[1,n]$, comprising all the names, which is set as current.
If a collision is heard then the current interval is partitioned in two subintervals in a balanced way.
These intervals will be processed recursively using a stack.
The left interval is pursued first, by becoming current, while the right interval is pushed on the stack. 
If a silence is heard for some current interval then such an interval is abandoned and the next interval is popped from the stack and made current.
If a packet is heard then the node that transmitted the packet is considered as having reserved the channel to transmits a number of packets up to its share's upgrade, which then is followed by silence.
Such a transmitting node appends an  entry with~$1$ to its list~$D_i$ for each transmission of a packet during upgrading its share, while the other nodes append $0$'s each.
The silence after a sequence of transmissions makes all stations abandon the currently processed interval  and the next interval is popped from the stack and made current.
The update thread terminates after the stack becomes empty and the main thread resumes. 

This completes the specification of algorithm \textsc{Search-Col\-li\-sion-Update}.
Let us recall the notation $\gamma=\sum_{i=1}^n C[i]$ interpreted as an estimate of the aggregate burstiness $\delta=\sum_{i=1}^n s_i$, as it was introduced in Section~\ref{sec:technical}.
In what follows, when we refer to  $\gamma$ then we mean the ultimate value it attains in an execution, like in the formulations of Lemmas~\ref{lem:search-by-collisions} and~\ref{lem:upgrade-collision}, unless indicated otherwise, like, for example, in the specification of algorithm \textsc{Non-A\-dapt\-ive-Dis\-cov\-er-Shares}.

%: lem: binary search with collisions has latency w log n

\begin{lemma}
\label{lem:search-by-collisions}

Packet latency of \textsc{Search-Col\-li\-sion-Update} is $\cO(\gamma(1+\log n))$ when it is executed  in a system of $n$ nodes.
\end{lemma}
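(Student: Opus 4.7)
The plan is to decompose a packet's wait time into main-thread rounds and update-thread rounds, bound each separately, and then sum the two.

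First, I would bound the total time the algorithm ever spends in the update thread. Each invocation performs a binary search over $[1,n]$: each internal node of the search tree corresponds to a collision (and triggers a split), and each leaf corresponds to either a silent round (an empty subinterval) or to a share upgrade. If $k$ underestimated nodes are processed in a single invocation, the binary search tree traversed has $\cO(k\log n)$ nodes, so that invocation uses $\cO(k\log n)$ rounds. Every share upgrade strictly increments $\gamma$ by one, and Lemma~\ref{lem:domination-of-array-c} applied to the ultimate value of $\gamma$ yields $\gamma\le\delta\le w$. Hence the total number of upgrades across the whole execution is at most $\gamma$, and summing over all invocations gives the bound $\cO(\gamma\log n)$ on the total time the algorithm spends in the update thread.

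Next, I would bound the main-thread wait of any individual packet. Because of the invariants on the lists $D_i$, the main pointer advances through exactly one position per main-thread round, so one complete cycle consists of $\gamma$ main-thread rounds and gives node~$i$ exactly $C[i]$ scheduled transmissions. Using $C[i]\le s_i$ together with $\gamma\le w$, the main-thread service rate $C[i]/\gamma$ at node~$i$ is at least the adversary's injection rate $s_i/w$, so the queue at node~$i$ is bounded by its local burstiness $s_i$. A FIFO packet at node~$i$ therefore waits at most $\lceil(s_i+1)/C[i]\rceil$ of node~$i$'s own slots, which costs $\cO(\gamma)$ main-thread rounds.

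Combining the two bounds, the real-time delay of a packet equals the number of main-thread rounds plus the number of update-thread rounds that occur while it is waiting. The former is $\cO(\gamma)$ by the second step, and the latter is trivially bounded by the total update-thread time, $\cO(\gamma\log n)$, from the first step. Adding these gives the claimed latency bound $\cO(\gamma)+\cO(\gamma\log n)=\cO(\gamma(1+\log n))$.

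The main obstacle I expect is the second step during the transient phase, before $\gamma$ and the estimates $C[i]$ have reached their ultimate values. In that phase a node can be temporarily underestimated, so its queue may grow beyond $s_i$ before the next collision triggers an upgrade. The delicate point is to tie any such transient overshoot at node~$i$ to the length of the preceding update-thread invocation, thereby paying for it out of the $\cO(\gamma\log n)$ update-thread budget rather than the $\cO(\gamma)$ main-thread term, so that the combined latency bound still holds for every packet.
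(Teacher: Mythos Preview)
Your decomposition into update-thread time and main-thread time is natural, and your bound of $\cO(\gamma\log n)$ on the total update-thread time is correct and matches the paper's. The problem is in step~2, and the ``obstacle'' you flag at the end is not a side issue but the heart of the matter.

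The deduction you make in step~2 is simply false: from $C[i]\le s_i$ and $\gamma\le w$ it does \emph{not} follow that $C[i]/\gamma\ge s_i/w$. For a concrete counterexample take $C[i]=1$, $s_i=3$, $\gamma=w=3$; then $C[i]/\gamma=\tfrac13<1=s_i/w$. The inequality you want actually holds in steady state, but for a completely different reason: once node~$i$ is no longer underestimated, by the \emph{definition} of underestimation at most $C[i]$ packets arrive in any window of $\gamma$ rounds, so the effective arrival rate is at most $C[i]/\gamma$. That has nothing to do with Lemma~\ref{lem:domination-of-array-c}. And during the transient phase the inequality can fail badly (other nodes may already be fully estimated, pushing $\gamma$ close to~$w$, while $C[i]$ is still far below $s_i$), so the $\cO(\gamma)$ main-thread bound on an individual packet's wait is not available there. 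Your proposed fix of charging the overshoot to the update budget is the right instinct but is not carried out.

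The paper avoids the per-node rate comparison entirely. Rather than arguing that each packet waits $\cO(\gamma)$ main-thread rounds, it bounds the \emph{maximum total queue size}: the total number of rounds ever spent in the update thread is at most $\gamma(1+\lceil\lg n\rceil)$, and the number of packets injected during those rounds is at most that many plus the burstiness~$\delta\le\gamma$, hence $\cO(\gamma(1+\log n))$. Once all upgrades are complete, the queues do not grow (precisely because no node is underestimated), so the maximum queue over the whole execution is $\cO(\gamma(1+\log n))$, and packet latency is then proportional to this plus burstiness. This route folds your transient and steady-state cases into a single packet-counting argument and never needs the service-rate-versus-injection-rate comparison that fails for you.
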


\begin{proof} 
We observe that once all the upgrades in an execution have been performed, packet latency is proportional to the maximum number of packets queued simultaneously plus burstiness.
The burstiness experienced in an execution is at most (the final value of)~$\gamma$.
Each upgrade of a share involves the upgrade thread, which takes at most $\lceil 1+\lg n\rceil$ rounds per upgrade.
There are at most~$\gamma$ invocations of the upgrade thread, so the total number of rounds spent on upgrades is at most $\lceil \gamma(1+\lg n)\rceil$.
The number of packets injected during the rounds spent on upgrades is at most the number of these rounds plus burstiness~$\delta$, which is $\cO(\gamma (1+\log n) )$.
It follows that the maximum number of packets queued simultaneously is $\cO(\gamma (1+\log n))$, and the packet latency is also $\cO(\gamma (1+\log n))$.
\end{proof}

%: par: algorithm CCU

\Paragraph{Algorithm \textsc{Cycle-Col\-li\-sion-Update}.}

Now we explain in detail how algorithm  \textsc{Cycle-Col\-li\-sion-Update} operates.
The two threads, main and update, start simultaneously.
They work concurrently, unless stated otherwise, for example when the main thread is paused for upgrading shares.

The main thread uses the lists~$D$, similarly as the main thread of algorithm \textsc{Search-Col\-li\-sion-Update}.
This means that a node~$i$ that has a~$1$ as a current entry in its list~$D_i$ in a round transmits if it has a packet.
The update thread uses a separate list of the names of all the nodes ordered in a circular order.
It is represented by a copy in each station, with a main pointer associated with it.
A node that is current on this list is referred to as \emph{current for update}. 
A node transmits in this round when the following is satisfied: the node is current for update, it considers itself underestimated, and it has a pending packet.

The feedback from the channel after a transmission by a node current for update can be of two kinds: either a collision or a packet heard.
We consider each of them next.

If the message is heard then the transmitting node turns \emph{persistent}, which means in will transmit a packet in every subsequent round, as long as it has pending packets.
Simultaneously, the main pointer on the circular list is advanced so that the next node becomes current for update; such an advance of this pointer occurs in each round until a collision is heard.
A node that is persistent turns back to non-persistent when either its queue becomes empty or when a collision occurs or when the node becomes current for update again.

Notice that there is only one persistent node at any time.
This is because when a node becomes persistent after a successful transmission, it continues to transmit until either it exhausts its packets or a collision occurs, which terminate the property of being persistent. 
In particular, no other node can become simultaneously persistent because this would mean two simultaneous successful transmissions: one by the first persistent node and another one by a contender node.

Now we discuss the case of a collision, which is caused by either two or three concurrent transmissions.
This bound on the number of concurrent transmissions is the case because only the following cases of transmissions are possible.
First occurs when a transmitter executes the main thread, which means the transmission is determined by the $D$ list.
Second occurs when a transmitter is ready for update.
Third occurs when a transmitter is persistent.

When a collision is heard then the main thread pauses and the update threads proceeds to upgrade shares. 
First a possible transmitter that is ready for update is given a chance to transmit.
If it wants to upgrade its share then it transmits a number of times, up to its share's  upgrade,  followed by a silent round.
Next a possible persistent transmitter is given a chance to upgrade its share.
This is performed similarly, by having it transmit a number of times, up to its share's  upgrade,  followed by a silent round.
These two upgrades are performed by appending $1$'s to the list~$D$ of the transmitting node and $0$'s to such lists of the other nodes, an entry for each transmission of a packet.
Such two silent rounds occur eventually. 
The first indicates a completion of upgrade by a transmitter that is ready for update, if there is any. The other indicating a completion of upgrade by a persistent node, if there is any.
After the second of these two silent rounds occurs, both the main thread and the update one resume regular operations, by having pointers on their respective lists advance in each round.

%: lem: backtrack thread has linear latency

\begin{lemma}
\label{lem:upgrade-collision}

Packet latency of algorithm \textsc{Cycle-Col\-li\-sion-Update} is $\cO(n+q + \gamma)$ when the algorithm is executed  in a system of $n$ nodes with $q$ packets in queues at the start.
\end{lemma}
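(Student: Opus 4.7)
My approach is to bound $\cO(n+q+\gamma)$ as the sum of three contributions: $\cO(\gamma)$ from the time spent in upgrade phases, $\cO(q)$ from draining the initial backlog at the aggregate throughput of one packet per round, and $\cO(n)$ from the length of the update-thread cycle.

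First, I would bound the total number of rounds spent inside upgrade phases across the entire execution. Each upgrade phase is triggered by a collision and consists of at most two chances (the update-current node, then the persistent node), each of which transmits up to its upgrade amount followed by a silent terminator. Every such transmission appends a fresh $1$ to the $D$ list of the upgrading node, so the sum of all upgrade amounts across the execution equals the final value of $\gamma$. Moreover, a collision can occur only if an off-schedule transmitter (either persistent or update-current) fires, and by the definition of the algorithm such a transmitter must be underestimated and will therefore upgrade by at least one; so the number of collision rounds is also at most $\gamma$, using Lemma~\ref{lem:domination-of-array-c} to ensure that $\gamma$ is finite (indeed $\gamma \le \delta \le w$). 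Combined with the silent terminators, the total upgrade overhead is $\cO(\gamma)$.

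Next, between upgrade phases the $D$-list schedule is conflict free with period $\gamma$, so the channel carries exactly one packet per non-upgrade round. Together with the initial $q$ packets and the at most $\cO(\gamma)$ packets injected during the upgrade overhead (aggregate rate being at most $1$), this shows that the aggregate queue content is $\cO(q+\gamma)$ at all times. The $\cO(n)$ term then captures the update-thread cycle: a node whose recent injection pattern signals underestimation waits at most $n$ rounds for the update pointer to reach it, after which a successful transmission makes it persistent and drains its excess backlog contiguously until the next collision (which, by the charging above, is itself absorbed into the $\cO(\gamma)$ upgrade budget). Composing these estimates with the FIFO discipline inside each queue yields the claimed per-packet bound $\cO(n+q+\gamma)$.

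The main obstacle will be translating the aggregate-queue bound into a per-packet FIFO-latency bound without incurring a multiplicative blowup for nodes whose shares $C[i]$ are small relative to the cycle length $\gamma$. The key observation is that whenever a node's queue is large enough to threaten latency, the node falls into one of two cases: either (i) it is already being served by its $D$-list slots at a rate matching its share, so the backlog drains linearly within $\cO(q+\gamma)$ rounds from the time it was formed; or (ii) it is underestimated, in which case the update-thread-triggered persistent mechanism drains its excess within $\cO(n)$ rounds of the packet's arrival, and the subsequent upgrade is absorbed by the $\cO(\gamma)$ charge of the first step. Showing that these two cases jointly cover every packet, and that their contributions add rather than multiply, is the crux of the argument.
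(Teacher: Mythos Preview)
Your decomposition into $\cO(q)$ for inherited packets, $\cO(n)$ for the wait until the circular update pointer reaches an underestimated node, and $\cO(\gamma)$ for the void rounds spent on upgrades is exactly the paper's approach; the paper's proof simply lists these same three contributions and notes that each upgrade invocation costs at most three void rounds (one collision, two terminating silences) with at most $\gamma$ invocations in total. Your version is more detailed---in particular your final paragraph on translating the aggregate bound into a per-packet FIFO bound names a subtlety the paper's terse proof leaves implicit---but the skeleton is the same.
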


\begin{proof} 
A packet delay may be attributed to either to the number of packets inherited in the queue or to the time spent waiting for an upgrade of a node,  in which a packet resides and that considers itself underestimated, or to the time spent to upgrade shares.
There are $q$ packets inherited in queues.
The time waiting to start an upgrade is at most $n$, because the list used by the upgrade thread consists of $n$ entries.
An update costs at most three void rounds, as they comprise one collision and two silences.
These void rounds for each share's upgrade happen only once. 
As there are $\gamma$ upgrades, the total packet delay is as claimed.
\end{proof}

%: par: the ultimate algorithm NADS

\Paragraph{The ultimate algorithm \textsc{Non-A\-dapt\-ive-Dis\-cov\-er-Shares}.}

Algorithm \textsc{Non-A\-dapt\-ive-Dis\-cov\-er-Shares} starts by invoking algorithm \textsc{Search-Col\-li\-sion-Update}.
The estimate $\gamma$ of the aggregate burstiness is available in each round, as explained in Section~\ref{sec:technical}.
As long as the inequality $n+\gamma \ge \gamma (1+\lg n)$ holds, where $\gamma$ is understood as  changing in time, then algorithm \textsc{Search-Col\-li\-sion-Update}  is executed.
When the inequality $n+\gamma < \gamma (1+\lg n)$ starts to hold then algorithm \textsc{Search-Col\-li\-sion-Update} stops and algorithm \textsc{Cycle-Col\-li\-sion-Update} takes over.
The algorithm is well specified because the inequality $n+\gamma \ge \gamma (1+\lg n)$ holds for small values of~$\gamma$, for example for $\gamma =0$, and the inequality $n+\gamma < \gamma (1+\lg n)$ holds for sufficiently large~$\gamma$, for example for $\gamma \ge n$.

%: thm: collision detection

\begin{theorem}
\label{thm:collision}

Algorithm \textsc{Non-A\-dapt\-ive-Dis\-cov\-er-Shares} provides $\cO(\min(n+w,w(1+\log n)))$  packet latency  for a channel with collision detection against the adversary of window $w$ and aggregate injection rate~$1$ in a system of $n$ nodes.
\end{theorem}

\begin{proof} 
If algorithm \textsc{Cycle-Col\-li\-sion-Update} is not invoked, then $n+\gamma \ge \gamma (1+\lg n)$ holds in all rounds.
In this case, packet latency is  $\cO(\gamma(1+\log n))$ by Lemma~\ref{lem:search-by-collisions}.

When \textsc{Cycle-Col\-li\-sion-Update} is invoked, then $n+\gamma < \gamma (1+\lg n)$ holds, when $\gamma$ is understood to denote the value of this parameter at the round of invocation of \textsc{Cycle-Col\-li\-sion-Update}.
This is the first round when the inequality holds, so then $\gamma=\Theta(n/\log n)$.
There are $\cO(n+\gamma)=\cO(n)$ packets in queues in a round of invocation of \textsc{Cycle-Col\-li\-sion-Update}, because the time spent on upgrades is $\cO(\gamma (1+\lg n))$ and the experienced burstiness is $\cO(\gamma)$.
In this case, packet latency is $\cO(n+q + \gamma)$, by Lemma~\ref{lem:upgrade-collision}, where $q=\cO(n)$ is the number of packets in queues at the start of \textsc{Cycle-Col\-li\-sion-Update}.
We obtain that $\cO(n+\gamma)$ is an upper bound on packet latency.

It follows that algorithm \textsc{Non-A\-dapt\-ive-Dis\-cov\-er-Shares} provides $\cO(\min(n+\gamma,\gamma\log n))$  packet latency.
The number $\gamma$ depends on an  execution, but the inequality $\gamma\le w$ holds, by Lemma~\ref{lem:domination-of-array-c}.
We observe that this implies $\min(n+\gamma,\gamma(1+\log n))\le \min(n+w,w(1+\log n))$.
This can be shown considering the following three cases.
If $n+w \ge w(1+\log n)$ then $n+\gamma \ge \gamma (1+\lg n)$ and $w (1+\lg n) \ge \gamma (1+\lg n)$.
If $n+w < w(1+\log n)$ and also $n+\gamma < \gamma (1+\lg n)$ then $n+\gamma > n+w$.
If $n+w < w(1+\log n)$ but $n+\gamma \ge \gamma (1+\lg n)$ then we note that $\gamma (1+\lg n) \le n+\gamma \le n+w$.
\end{proof}

\subsection{An adaptive algorithm without collision detection}

Now we give an adaptive algorithm for channels in which collision detection is not available.
We call the algorithm \textsc{Adaptive-Discover-Shares}.
This algorithm simulates the two algorithms in \textsc{Non-A\-dapt\-ive-Dis\-cov\-er-Shares}  by way of running algorithms \textsc{Search-Silence-Update} and \textsc{Cycle-Silence-Update}. 
The simulation proceeds as follows.

In both simulations of the corresponding main threads,  a node  scheduled to transmit but without pending packets in its queue transmits a control bit to indicate this, rather than pause and produce a silent round. 
It follows that a silence occurs only when a thread modified in this way creates a collision.
Such a silent round results in invoking the update thread.

The update thread in \textsc{Search-Col\-li\-sion-Update} relies on collision detection to implement a binary search.
Now we need to detect a collision among silent rounds.
Silence heard in a round~$t$ during the update thread indicates that either there was no transmissions or a collision occurred.
We resolve which of these is the case in the next $t+1$-st round  as follows.
All the nodes that transmitted in round $t$ transmit together with node~$1$.
Node~$1$ may have transmitted in round~$t$, but it transmits a message with a control bit in round $t+1$, so it does not need to have a pending packet.
There are two possible events occurring in round $t+1$: either the round is silent or a message is heard.
A silence in round~$t+1$ indicates that more than one node transmitted, as node~$1$ certainly did  transmit, so there was at least one node transmitting in the previous round~$t$.
This means that there was a collision in round~$t$, as no message was heard in that round.
If a message is heard in round $t+1$, then this only can be the message transmitted by node~$1$.
Therefore no other node transmitted in the previous round~$t$, and so round $t$ was silent.

The simulation of \textsc{Cycle-Silence-Update} proceeds as follows.
When the simulated main thread is running, then, in each round,  some node transmits a message,  as prompted by the occurrence of $1$ in its list~$D$.
Therefore a silent round indicates a collision.
When this occurs, some two nodes (a candidate for update and a persistent one) are given an opportunity to upgrade their shares.
This is performed by a sequence of transmissions in consecutive rounds, each resulting in a message heard. 
Each among these two nodes, if any, indicates a completion of this task by a silent round (when there is no corresponding node then a silent round indicates that).
Therefore two silent rounds in this situation indicate lack of transmissions in them, which triggers the two threads to resume advancing pointers on their lists.

%: thm: adaptive

\begin{theorem}
\label{thm:adaptive}

Algorithm \textsc{Adaptive-Discover-Shares} provides $\cO(\min(n+w,w\log n))$ packet latency for a channel without collision detection against an adversary of window~$w$  and aggregate injection rate~$1$ in a system of $n$ nodes.
\end{theorem}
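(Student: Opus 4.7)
The plan is to port the analysis of Theorem~\ref{thm:collision} for \NADS\ to the simulated setting of \ADS, paying attention only to the constant-factor overhead introduced by emulating collision detection via control bits. First I would verify that the two simulations---\SSU\ of \SCU, and \CSU\ of \CCU---reproduce the logical behavior of their originals up to constant slowdown, and then re-derive the analogues of Lemmas~\ref{lem:search-by-collisions} and~\ref{lem:upgrade-collision} in the simulated setting and combine them exactly as in the proof of Theorem~\ref{thm:collision}.

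For \SSU\ simulating \SCU: in the main thread, a scheduled transmitter without pending packets transmits a control bit, so a silent round can only be produced by a genuine collision, which triggers the update thread as before. In the update thread (binary search), each potentially-silent round~$t$ requires the auxiliary round~$t+1$ in which node~$1$ joins every node that transmitted in round~$t$; the paper's resolution rule distinguishes ``silence at~$t$'' (a message heard at~$t+1$) from ``collision at~$t$'' (silence at~$t+1$). This doubles the length of each probe, so each share-upgrade invocation of the update thread uses at most $2\lceil 1+\lg n\rceil$ rounds, and the bound $\cO(\gamma(1+\log n))$ on latency established in Lemma~\ref{lem:search-by-collisions} is preserved up to a constant factor. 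For \CSU\ simulating \CCU: during the main thread some node always transmits (either a real packet or a control-bit filler), so silence unambiguously signals a collision without any extra round. The share upgrades still terminate with the same two silent markers described for \CCU, so each upgrade costs $\cO(1)$ void rounds and Lemma~\ref{lem:upgrade-collision}'s bound $\cO(n+q+\gamma)$ carries over verbatim.

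With both analogues in hand, the switching argument of Theorem~\ref{thm:collision} applies unchanged. If the \SSU-phase never ends then the invariant $n+\gamma\ge \gamma(1+\lg n)$ holds throughout and latency is $\cO(\gamma(1+\log n))=\cO(n+\gamma)$. If \ADS\ switches to the \CSU-phase, then at the switch $\gamma=\Theta(n/\log n)$ and the number of queued packets is $q=\cO(n+\gamma)=\cO(n)$, so the \CSU-phase contributes $\cO(n+q+\gamma)=\cO(n+\gamma)$. Thus \ADS\ guarantees $\cO(\min(n+\gamma,\gamma(1+\log n)))$ latency, and combining with Lemma~\ref{lem:domination-of-array-c} and the three-case comparison used at the end of the proof of Theorem~\ref{thm:collision} (replacing $w$ upper bound on $\gamma$) yields the claimed $\cO(\min(n+w,w\log n))$.

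The main obstacle I foresee is not the arithmetic but confirming that the simulation preserves the structural invariants of the lists~$D_i$ and their main pointers. In \SSU, feedback about whether round~$t$ was silent or collided is only known one round later, so pointer advancement and any state transitions conditioned on that feedback must be deferred correspondingly at every node; I would need to check that this uniform one-round deferral is consistent across all nodes (so invariants (i)--(iii) from Section~\ref{sec:two-algorithms} still hold) and that the extra ``disambiguation'' rounds never interleave in a way that could confuse the update-thread stack with ongoing main-thread transmissions. Once this bookkeeping is verified, the latency bounds transfer mechanically.
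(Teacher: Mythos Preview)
Your proposal is correct and follows essentially the same approach as the paper. The paper's own proof is considerably terser: it simply asserts that the simulation incurs a constant overhead per simulated round (verified ``by a direct inspection of the simulation mechanism''), so the latency of \ADS\ is of the same order as that of \NADS, and then invokes Theorem~\ref{thm:collision} directly---it does not re-derive analogues of Lemmas~\ref{lem:search-by-collisions} and~\ref{lem:upgrade-collision} or repeat the switching/case analysis, since those are already encapsulated in Theorem~\ref{thm:collision}. Your more explicit unpacking of the constant-overhead claim (doubling of binary-search probes in \SSU, no extra rounds needed in the main thread of \CSU) is exactly the ``direct inspection'' the paper alludes to, and the bookkeeping obstacle you flag about deferred feedback and the $D_i$-list invariants is real but routine, as you suspect.
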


\begin{proof}
The simulation we employ produces a constant overhead per each simulated round, which is verified by a direct inspection of the simulation mechanism.
Therefore packet latency of the simulating algorithms is of the same order of magnitude as that of  the simulated algorithm.
Theorem~\ref{thm:collision} gives a bound for the simulated algorithm, and the same asymptotic bound holds true for the simulating algorithm.
\end{proof}

\section{A Non-adaptive Algorithm without Collision Detection}

\label{sec:non-adaptive-no-collision-detection}

In this section we consider channels without collision detection. 
We develop a non-adaptive algorithm of bounded packet latency for aggregate injection rate~$1$.
The algorithm is called \textsc{Colorful-Nodes}.
The algorithm has each node maintain a private list of names of discovered active nodes.
The name of a newly discovered node is apparent to each node so it is immediately appended to the lists.
There is a pointer associated with the list of names of discovered nodes, which is occasionally advanced by one position in a circular manner; when this happens then all the nodes do this in unison.
It follows that these lists are identical and are manipulated in the same way by each node

An execution of  algorithm \textsc{Colorful-Nodes} begins with a stage we call preparation, which is followed by phases that are iterated in an unbounded loop.
A \emph{phase} consists of three consecutive stages.
A pure stage occurs first in a phase, it is followed by an update, and finally a makeup concludes the phase.
It may happen in some phase that the update and makeup stages are missing, to the effect that the initial pure stage of the phase comprises the whole remaining part of the execution.
Such a situation may occur only when,  starting from a certain point in time in this pure stage, a packet is heard in every round.

%: par: intuitions behind stage design

\Paragraph{Intuitions and motivation behind stage designs.}

The following are the intuitions that have motivated and guided the design of stages in algorithm \textsc{Colorful-Nodes}.

The purpose of the preparation stage is merely to discover at least one active node.
This stage occurs only once.

Pure stages are for transmissions by the active nodes that are already discovered.
The amount of time allotted for such transmissions is determined by the bounds on shares of the nodes as they have been estimated up to this stage.
It is during pure stages that most of the work of broadcasting is expected to be accomplished.

An update stage serves the purpose to give nodes an opportunity to announce to be underestimated.
Such announcements result in the respective entries of the array~$C$ getting incremented in order to improve the current estimates of shares.
Nodes that are not underestimated pause during an update stage, so these rounds could be considered wasted for them.
Eventually, no node is underestimated in an execution.
After this happens, update stages consist of silent rounds only, if any occur.

In any scenario, an update stage includes $n$ silent rounds.
We do not rush into an update stage when a pure stage is under way; we wait until $n$ silent rounds have been accrued during a pure stage.
Each such a silent round indicates that the corresponding node scheduled to transmit has no packets.

When a pure stage ends, then the nodes are partitioned into those that have been detected to have no packets at some point of the last pure stage and the remaining ``busy'' ones that have not been detected as such.
The silent rounds in pure and update stages create a potential for the queues to grow unbounded at nodes that consistently obtain their maximum load of packets.
It is the purpose of a makeup stage to compensate ``busy'' nodes for the rounds wasted during the forced silent rounds of the preceding pure and update stages.

%: par: stage implementation details

\Paragraph{Details of stage implementation.}

We describe the details of implementation for each kind of stage: preparation, pure, update, and makeup.

The \emph{preparation} stage is organized such that every node has one round to transmit, assigned in a round robin manner.
A node with a packet available transmits one when the node's turn comes up, otherwise the node pauses during its time slot.
The preparation terminates after some node performs the first transmission. 
The node~$i$ whose packet has been heard during the preparation becomes discovered, which is recorded by setting $C[i]\leftarrow 1$.

During a \emph{pure} stage, the discovered nodes proceed through a sequence of transmissions, starting from the current node on the list of discovered nodes.
A node~$i$ has a segment of consecutive $C[i]>0$ rounds allotted for exclusive transmissions.
During this segment of rounds, the node~$i$ keeps transmitting as long as it has packets, otherwise the node~$i$ pauses.
The pointer is advanced, and the next node takes over, when either the current node~$i$ has used up the whole segment of~$C[i]$ assigned rounds or just after node~$i$ did not transmit while scheduled to.
After $n$ silences occur during a pure stage, then this concludes the stage and an update stage follows.

During a pure stage, a \emph{marker} is generated each time a silent round occurs.
Markers come with one of two colors.
When a node~$i$ holds a \emph{green marker}, then this color indicates that the marker was generated when $i$ was silent during a round in a segment of $C[i]$ rounds allotted for $i$ to transmit.
A \emph{red marker} held by node~$i$ indicates that it was some other node~$j$, for $i\ne j$, that was silent during a round in a segment of $C[j]$ rounds allotted to~$j$ to transmit, which generated the marker.
Every node keeps a list of markers and their assignments to nodes in its private memory.
All nodes perform operations on markers in exactly the same way in unison.
No node holds a marker in the beginning of a pure stage. 
Only nodes already discovered get markers assigned to them during a pure stage.
A discovered node may hold either no markers or a green one or a red one in a round of a pure stage.

All operations on markers are triggered by silences during update stages.
When a new marker is created and assigned to a node then some old markers may be reassigned.
The details are as follows.

Let a node~$i$ be silent in a round assigned to $i$ to transmit in an update stage: this generates a marker.
\begin{enumerate}
\item
If $i$ does not hold a marker yet, then the new marker is colored green and it is assigned to~$i$.
\item
If $i$ already holds a green marker, then the new marker is colored red and it is assigned to the first available discovered node, in the order of their names, that does not hold a marker yet.
\item
If $i$ holds a red marker, then the new marker is colored green, it is assigned to~$i$, while the original red marker held by~$i$ is reassigned to a discovered node that does not hold a marker yet.
\end{enumerate}
A pure stage terminates after every discovered node gains a marker.
A discovered node is considered \emph{colored} by the same color as the marker it holds when an update begins.
We need markers only to assign colors to discovered nodes.
After every discovered node has gained a marker, and hence a color, then we refer only to the nodes' colors.
Colors remain assigned to the discovered nodes through the end of the next makeup stage, and so to the end of the phase.
Some colors will be modified during the makeup stage.

An \emph{update} stage is to give every node one opportunity to transmit exclusively for a contiguous segment of  rounds.
This does include candidate nodes.
A node~$i$ that is underestimated by an amount~$x$ transmits $x$ times, which is followed by a silent round.
(It might happen that an underestimated node does not have sufficiently many packets ready to be used to announce by how much it is underestimated in an update stage.
This does not create a design issue, as this indicates that so far the node has had enough room to transmit its packets.)
A number~$y\le x$ of transmissions in such a situation results in an immediate increment $C[i]\leftarrow C[i]+y$ at all nodes.
In particular, when a new node~$k$ becomes discovered, then this results in setting $C[k]$ to a positive value. 
When a node simply pauses in the first round assigned to it, then the corresponding entry in the array~$C$ is not modified.
In particular, when a candidate node~$j$ does not transmit then it maintains its candidate status, which is represented by $C[j]=0$.
After each node has had a chance to perform all its transmissions in a update stage, then the stage terminates and a makeup stage follows next.

Green nodes have had their queues empty at some point in the last pure stage. 
A \emph{makeup} stage has a purpose to have red nodes empty their queues as well.
These nodes transmit in the order inherited from the list of discovered nodes, starting from the current node, if it is red, or otherwise the next red one following the current node on the list.
A red node~$i$ has a segment of consecutive $C[i]$ rounds allotted for exclusive transmissions each time it becomes current while red.
After a red node $i$ performs $C[i]$ transmissions then it maintains the red status but stops being current, unless it is the only red node in a round.
A silent round by a red node~$i$, during $C[i]$ assigned rounds, results in changing the color of the node to green immediately and advancing the pointer to the next red node, if any.
A makeup stage concludes as soon as there are no more red nodes.

This concludes the specification of all  kinds of stages, and hence of algorithm \textsc{Colorful-Nodes}.

Observe that algorithm \textsc{Colorful-Nodes} is conflict free.
This follows by the design of stages.
The relevant property is that each stage uses a list of nodes and a pointer on this list indicates which node is to transmit in a given round.

%: par: performance of Colorful Nones

\Paragraph{The performance of algorithm \textsc{Colorful-Nodes}.}

The algorithm uses four kinds of stages.
Assuming that some packets are injected, three stages are of bounded duration while one is not necessarily so.
A preparation stage ends as soon as a packet is heard, which is a certain event assuming that some packets are injected.
An update stage takes up to $n+w$ rounds.
A makeup stage is also of bounded duration, as we will show in Lemma~\ref{lem:time-of-makeup}. 
Pure stages are different in that it is possible that some pure stage does not end.
If this is the case, then there are finitely many phases, with the last one ending in a pure stage.
In such a scenario, starting from some round, a packet is heard in every round.

%: lemma on two phases

\begin{lemma}
\label{lem:two-phases}

A packet stays in a queue during at most two consecutive phases.
\end{lemma}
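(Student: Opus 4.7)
The plan is to prove the lemma by induction on the phase number, strengthening the statement to: at the start of every phase $P'$, each node's local queue contains only packets that were injected during the immediately preceding phase $P$. The base case is trivial, since all queues are empty at the start of execution.

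For the inductive step, fix a node $i$; by the inductive hypothesis the packets in $i$'s queue at the start of $P'$ are exactly the $P$-injected packets that were not transmitted during $P$. I split into cases by the color $i$ acquires during $P'$. If $i$ turns green during the pure stage of $P'$, then $i$ is silent in one of its $C[i]$ allotted pure rounds, witnessing an empty queue at that moment; by FIFO, every packet in $i$'s queue older than the injections made in $P'$ --- in particular every $P$-injected packet --- must already have been transmitted before that silent round. The same conclusion holds if $i$ becomes green only during the makeup stage, because a silent round inside a $C[i]$-long makeup slot again witnesses an empty queue. In the remaining case $i$ is red throughout $P'$ and thus transmits $C[i]$ packets in its pure slot and another $C[i]$ in its makeup slot, plus any transmissions $x_i\ge 0$ it performs in the update stage while declaring underestimation. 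By FIFO this evacuates the oldest $2C[i] + x_i$ packets in $i$'s queue, so it suffices to show that the number of $P$-injected packets present at the start of $P'$ is at most $2C[i] + x_i$.

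The bound in this last case is the crux. I would derive it by applying the same color analysis one step back to phase $P$ itself: either $i$ went green at some round of $P$, in which case the $P$-packets remaining at the start of $P'$ were all injected in the tail of $P$ after that silent round, bounded by the local window constraint $s_i$ on a suitably short tail; or $i$ was red throughout $P$ as well, in which case by symmetry $i$ already transmitted $2C[i] + x_i'$ packets in $P$, and combining this with the adversary's cap of $s_i$ injections per window $w$ and the monotone growth of the estimates $C[\cdot]$ caps the number of $P$-injected packets inherited by $P'$. The main obstacle will be controlling the total number of injections into $i$ during a single phase against the transmission budget $2C[i] + x_i$: this requires a careful accounting of phase length against $w$, together with a special treatment of the phase in which $C[i]$ finally catches up to $s_i$, so that the update-stage contribution $x_i = s_i - C[i]$ absorbs exactly the surplus injections generated before the bump.
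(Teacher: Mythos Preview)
Your proposal rests on a misreading of the makeup stage, and this makes the ``remaining case'' you call the crux entirely vacuous.

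Reread the specification of the makeup stage: a red node~$i$ is allotted a block of $C[i]$ rounds \emph{each time it becomes current while red}, and ``a makeup stage concludes as soon as there are no more red nodes.'' In other words, the makeup stage is not a single pass that gives each red node one block of $C[i]$ transmissions; it cycles through the red nodes repeatedly until every one of them has produced a silent round in its allotted block, at which moment its color flips to green. Consequently, if phase~$P'$ completes at all, then its makeup stage completed, and by the time it completed every initially red node has turned green --- meaning every discovered node's queue was empty at some round of~$P'$. There simply is no node that ``is red throughout~$P'$.''

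Once you see this, the paper's argument is immediate and needs no induction, no packet counting, and no comparison of $2C[i]+x_i$ against the inherited backlog: if phase~$J$ and phase~$J+1$ both complete, then the queue of the node holding the packet empties at some round of~$J$ and again at some round of~$J+1$; by FIFO the packet is heard by the second of these empty moments, hence within the two phases. If one of these phases never completes, the conclusion is trivial. Your inductive invariant is equivalent to the lemma, but the heavy accounting you sketch for the red-throughout case (bounding tails by~$s_i$, tracking the phase in which $C[i]$ catches up to~$s_i$, etc.) is unnecessary, and would in any event not go through as stated: the tail of a phase after a node goes green can be as long as the makeup stage, i.e.\ $\Theta(nw)$ rounds, so ``a suitably short tail'' is not available.
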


\begin{proof}
Let us consider a packet injected into some node~$v$ in a phase~$J$.
If either phase~$J$ or phase~$J+1$ are the last phases then the packet cannot stay beyond them.

In what follows, we assume that both phases $J$ and $J+1$ end, which means that each stage in them ends.
This implies that the queue of each node becomes empty at some point during each phase.
This property is provided by the design of the algorithm and by the meaning of colors of nodes.
Green nodes get their queues empty early in a phase, during the respective pure stage, which comes to an end by the assumption in this case.
Red nodes accomplish getting their queues empty during the next makeup stage, which also comes to an end by the same assumption.
A packet is heard by the first round following its injection in which its queue becomes empty.

If the considered packet gets injected before the queue at~$v$ becomes empty during phase~$J$, then all the rounds when the packet stays in its queue are included in phase~$J$.
When the queue of $v$ gets empty in the phase~$J$ before the packet is injected in this very phase~$J$, then the packet will be heard by the end of the next phase~$J+1$ at the latest, by the round in which the $v$'s queue gets empty in phase $J+1$.
In any case, a packet injected in phase~$J$ gets heard by the end of phase~$J+1$.
\end{proof}

At most $n$ silent rounds occur during the preparation stage, starting from the first packet injection.
Similarly, there are at most $n$ silent rounds during every stage that follows.
For accounting purposes, we partition silent rounds into \emph{blocks} defined as follows.
The first block comprises the silent rounds that occur during the last $n$ rounds of the preparation stage (or all the silent rounds during the preparation stage, in case there are less than $n$ rounds in the preparation stage), and those silent rounds that occur during the first pure stage and, assuming that this pure stage ends, the silent rounds of the first update stage.
The next block, if it exists, consists of at most $3n$ silences incurred during the immediately following makeup, pure and update stages, assuming that all these stages exist.
This continues throughout an execution, a block comprising silences in consecutive makeup, pure and update stages.
When some stage does not end, then this results in some block occurring last and having fewer silent rounds than it would have otherwise.

%: lem: queues in CN

\begin{lemma}
\label{lem:queues-in-CN}

There  are $\cO(n+w)$ packets in queues in any round of an execution of algorithm \textsc{Colorful-Nodes} against adversaries of window~$w$  and aggregate injection rate~$1$ in a channel with $n$ nodes.
\end{lemma}

\begin{proof}
First let us consider the case of execution in which each stage ends.
Such executions have infinitely many phases and infinitely many blocks.

The total net increase of the number of packets in queues during up to $3n$ silent rounds of a block is at most $3n+w$, where the number $3n$ is due to the injection rate and $w$ to the burstiness~$\delta\le w$.
We argue that the number $6n+w$ of packets is an upper bound on the number of packets queued at all times.
The argument is by induction on the phase number, where we show the invariant that $6n+w$ is an  upper bound on the number of packets in all the queues.

The silent rounds occurring in a makeup stage are accounted for in the next makeup stage, so we may consider only rounds with successful transmissions in a makeup stage for the sake of accounting purposes.
These rounds result in suppressing the number of packets in the red nodes.
While the number of packets of the red nodes is shrinking over a makeup stage, with possible variations due to burstiness, the number of packets accumulated in the green nodes could be  increasing.
This has the effect of trading (a decrease of the number of) packets in red nodes for some (increase of the corresponding number of) packets in green nodes.
This means that the increase of the number of packets in queues, due to silent rounds, by $3n+w$ packets, may affect both red and green nodes.

For a specific block of up to $3n$ silent rounds, this increase of the number of packets is neutralized in a node when its queue becomes empty.
As the neutralization process is spread over at most two consecutive phases, by Lemma~\ref{lem:two-phases}, the increases contributed by two consecutive blocks may exist simultaneously, which may result in the effect of up to doubling the increase of $3n$ packets, contributed by one block, due to injection rate, while the contribution of $w$ to the bound, due to the burstiness, is a global one-time effect.

Next consider the case of executions in which some stage does not end.
It is not the first preparation stage, unless there are no packets to transmit.
This stage that does not end would end if only sufficiently many silent rounds occurred.
It follows that, in this case, a packet is heard in every round starting from some round~$t$.
Any increase of the number of packets starting from round~$t$ can be due only to burstiness.
Clearly, there exists another execution, that proceeds in exactly the same manner as the considered one up to round~$t$, and which afterwards has the property that each stage ends.
The number of queued packets up to round $t$ is the same in both executions.
This number of packets is $\cO(n+w)$, because one of these executions has infinitely many phases and the bound $\cO(n+w)$ on the number of packets in queues applies to any round in it.
\end{proof}

A block of silent rounds may result in an increase of the number of packets in queues of red nodes, as compared to the beginning of the immediately preceding pure stage.
A makeup stage is there to alleviate this effect.

%: lem: makeup stage

\begin{lemma}
\label{lem:time-of-makeup}

A makeup stage of algorithm \textsc{Colorful-Nodes}  takes $\cO(nw)$ rounds against adversaries of window~$w$  and aggregate injection rate~$1$ in a channel with $n$ nodes.
\end{lemma}

\begin{proof} 
We denote by $G$ and~$R$ the sets of  nodes that begin a makeup stage as green and red, respectively.
Let $g$ equal the sum $\sum_{i\in G} C[i]=g$ of the entries of the array~$C$ over the green nodes, and $r$ be the similar sum $\sum_{i\in R} C[i]=r$ of the entries of the array~$C$ over the red nodes, as at the start of the makeup stage.
We have $g+r = \gamma$, because $G$ and~$R$ make a partition of all the nodes.
By  Lemma~\ref{lem:domination-of-array-c},  the sets $G$ of green nodes and $R$ of red nodes have had packets injected into them with the cumulative rates of at most~$g/\gamma$ and~$r/\gamma$, respectively, during the previous phase.

The rounds of the makeup stage, when only the red nodes transmit, can be conceptually partitioned into disjoint segments of~$\gamma$~rounds each.
We may employ the following accounting to assign roles to rounds in each such a segment.
The first~$r$ rounds could be considered as devoted to unloading new packets injected into red nodes during this segment.
The remaining~$g$ rounds could be treated as accounting for unloading packets injected into red nodes during the preceding block of silent rounds.

There are some $p$ packets in red nodes in the beginning of this makeup stage.
We need to account for the packets injected into these nodes during at most two phases, as each node gets its queue empty at least once during a phase.
A phase contributes at most $3n$ silent rounds, which translates into at most $6n$ packets, by Lemma~\ref{lem:two-phases}.
There could be a surge of up to $w$ packets injected due to burstiness, but this fluctuation needs to be compensated by the corresponding number of rounds that follow within the window of $w$ rounds with no packets injected in them, due to the definition of a window-type adversary.
This means that such surges are compensated by the adversary's behavior and may contribute up to $w$ to packet latency.
We conclude that we can use $p=6n$ in our estimates of the duration of a makeup stage, as we account for rounds spent on transmissions compensating rounds when the adversary could inject.

It takes $\frac{p}{g}$ segments of $\gamma$ rounds each to dispose of all the packets in the red nodes, possibly increased by at most $w$ rounds due to burstiness.
This makes $\frac{p\gamma}{g}+w$ rounds of the makeup stage when packets are heard.
There are also $r$ silent rounds in this stage.
The total number of rounds is therefore $\cO(nw)$, because $p=6n$ and the inequalities $\gamma\le w$, $r\le n$, and $g\ge 1$ hold.
\end{proof}

The following theorem summarizes the performance of algorithm \textsc{Colorful-Nodes}.

%: thm: performance of Colorful Nodes

\begin{theorem}
\label{thm:colorful-nodes}
 
When algorithm \textsc{Colorful-Nodes} is executed against adversaries of window~$w$  and aggregate injection rate~$1$ in a channel with $n$ nodes, then packet latency is $\cO(nw)$, while the number of queued packets is $\cO(n+w)$ in any round.
\end{theorem}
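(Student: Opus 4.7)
The plan is to derive the theorem as a direct combination of the earlier lemmas: Lemma~\ref{lem:queues-in-CN} gives the queue-size bound at every round, and the latency bound is obtained by invoking Lemma~\ref{lem:two-phases} to confine a packet's waiting time to at most two consecutive phases, then bounding the length of a single phase stage-by-stage with the help of Lemma~\ref{lem:time-of-makeup}.

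The queue-size half of the statement is immediate from Lemma~\ref{lem:queues-in-CN}. For the latency half, I would first observe that by Lemma~\ref{lem:two-phases} every packet is heard within the phase of its injection or the next one, which reduces the problem to bounding the length of a phase. A phase decomposes into a pure, an update, and a makeup stage, so I would bound each. The update stage is short: each of the $n$ nodes is offered one exclusive transmission slot possibly followed by a single silent round, while the aggregate number of share-upgrade transmissions across the stage is bounded by $\gamma\le w$ via Lemma~\ref{lem:domination-of-array-c}, yielding $\cO(n+w)$ rounds. The makeup stage is $\cO(nw)$ by Lemma~\ref{lem:time-of-makeup}.

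The hard part is the pure stage, since it need not terminate in general. I would split into two cases. If the pure stage ends, then by construction it accumulates at most $n$ silent rounds, and the transmissions between consecutive silences can be charged, exactly as in the proof of Lemma~\ref{lem:time-of-makeup}, against (i) the initial queue contents, bounded by $\cO(n+w)$ through Lemma~\ref{lem:queues-in-CN}, and (ii) injections occurring during the stage, which in turn are paid off within the same cycle of length $\gamma\le w$, giving stage length $\cO(nw)$. If the pure stage never terminates, then by the algorithm's description a packet is heard in every round beyond some point; in that regime I would bound the latency of an individual packet directly, without going through phase length, by noting that a discovered node $v$ is served in a pure stage at rate at least $C[v]/\gamma\ge 1/w$ with at most $\cO(n+w)$ packets queued ahead, so its packets clear within $\cO(nw)$ rounds.

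I expect the pure-stage analysis to be the main obstacle, because it requires matching per-node injection and service rates carefully against the global queue bound $\cO(n+w)$ while combining the two sub-cases with Lemma~\ref{lem:two-phases} so that packets spanning the boundary between a terminating phase and a subsequent indefinite pure stage are also accounted for. Once the pure stage is bounded by $\cO(nw)$ in either case, combining with the $\cO(nw)$ makeup bound and the dominated $\cO(n+w)$ update bound yields the claimed $\cO(nw)$ packet latency.
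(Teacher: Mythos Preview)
Your overall structure matches the paper's, but there is a genuine gap in how you handle the pure stage when it \emph{does} terminate. You attempt to bound the \emph{length} of a terminating pure stage by $\cO(nw)$ via a charging argument imported from Lemma~\ref{lem:time-of-makeup}. This does not work. In a makeup stage the red nodes receive packets at aggregate rate $r/\gamma<1$ while transmitting at rate~$1$, so there is net progress and the charging closes. In a pure stage the aggregate injection rate can be~$1$ and the transmission rate is also~$1$; charging transmissions to ``injections occurring during the stage'' yields the vacuous inequality $T\le (\text{initial queue})+T+w$. Concretely, once the shares are correctly estimated the adversary can keep every discovered node supplied so that no silence occurs for an arbitrary number of rounds, then produce the $n$ silences; the stage terminates but its length is not $\cO(nw)$. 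The paper says this explicitly: ``There is no general upper bound on the duration of any such a stage.''

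The paper therefore does \emph{not} bound phase length. Instead, for a pure stage it bounds the \emph{delay of a packet within that stage} directly: the transmission schedule during a pure stage matches the injection rates, so a packet waits at most $n$ rounds (for the silences) plus the number of packets simultaneously queued, which is $\cO(n+w)$ by Lemma~\ref{lem:queues-in-CN}. This per-packet bound holds regardless of whether the pure stage terminates, so no case split is needed there, and it is in fact tighter than the $\cO(nw)$ you obtain in your non-terminating sub-case from the crude service-rate estimate $C[v]/\gamma\ge 1/w$. Combining this $\cO(n+w)$ pure-stage delay with $\cO(n+w)$ for update and $\cO(nw)$ for makeup (Lemma~\ref{lem:time-of-makeup}), and observing via Lemma~\ref{lem:two-phases} that a packet passes through at most two phases, gives the $\cO(nw)$ latency.
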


\begin{proof} 
The estimate of a bound on the number of queued packets in any round is given by Lemma~\ref{lem:queues-in-CN}.
To consider packet latency, we proceed through two cases determined by whether all stages eventually end or rather some stage lasts forever.

Suppose first that each stage eventually ends. 
We examine how stages of all kinds contribute to the length of phases and through that to packet latency.

Each of the preparation, update and makeup stages contributes at most their own duration to the length of a phase.
In the case of the preparation stage, we mean that is over in at most $n$ rounds since the first packet is injected.
An update stage takes $\cO(n+w)$ rounds, by its design.
A  makeup stage takes $\cO(nw)$ rounds, by Lemma~\ref{lem:time-of-makeup}.

Next we consider pure stages.
There is no general upper bound on the duration of any such a stage.
A pure stage ends after $n$ silent rounds.
In the course of a pure stage, the rate of transmitting packets is equal to the rate with which they were injected in the preceding phase, by how the array~$C$ is used in scheduling transmissions.
This means that a packet gets delayed by at most $n$ plus the maximum number of packets queued while this packet is handled, which is $\cO(n+w)$ by Lemma~\ref{lem:queues-in-CN}.

Next we consider the case when some stage never ends.
It needs to be a pure stage, as makeup stages are excluded by Lemma~\ref{lem:time-of-makeup}.
Up to this pure stage, a bound on packet latency derived with the assumption that each stage ends applies.
The rate of transmitting packets during the last (unbounded) pure stage is equal to the rate with which packets were injected in the preceding phase, because a packet is heard in every round, starting from some round in the pure stage.
It follows that packet latency during such an unbounded pure stage is of the order of magnitude of the number of queued packets, which is $\cO(n+w)$ by Lemma~\ref{lem:queues-in-CN}.

There are finitely many cases, as considered above, each producing a partial bound on packet delay.
Each is  either of the form $\cO(n+w)$ or $\cO(nw)$, which gives~$\cO(nw)$ as the overall bound.
\end{proof}

Observe that the bound on packet latency in Theorem~\ref{thm:colorful-nodes} is tight.
This is because algorithm \textsc{Colorful-Nodes} is conflict free, so packet latency can be $\Omega(nw)$ in some executions when the algorithm is executed against  adversaries of sufficiently large windows~$w$ in systems of sufficiently large sizes~$n$, by Theorem~\ref{thm:collision-free-quadratic-lower-bound}.

\section{Conclusion}

\label{sec:conclusion}

We introduced a model of adversarial queuing on multiple access channels in which individual injection rates are associated with nodes.
The partitioning of the aggregate rate among the nodes constrains the adversary but is unknown to the nodes.
We developed a number of algorithms for the aggregate rate~$1$ that transmit packets with bounded latency.
The bounds on queue size and packet latency of our algorithms are not expressed in terms of the distribution of the aggregate injection rate among the nodes as their individual rates, but are given only in terms of the number of nodes~$n$ and the burstiness, which equals the window size~$w$ for the aggregate rate~$1$.

The purpose of this work was to compare the communication environments determined by multiple access channels in which adversaries are determined by individual injection rates with channels in which adversaries are constrained only by global injection rates, as studied in~\cite{ChlebusKR09}.
In both environments, no property of adversaries is known to algorithms.
A comparison of these adversarial models was to be in terms of the attainable quality of broadcasting, for the maximum throughput of~$1$.
The main discovered difference between the globally-restrained and individually-restrained adversaries is that bounded packet latency  by non-adaptive  algorithms is achievable in an adversarial model in which individual injection rates are associated with nodes, which is impossible for adversaries that are globally-restrained only.

We developed algorithms  for a window-type adversary with packet latency that is close to asymptotically optimal in the following two cases.
One is when the algorithms are adaptive and channels are without collision detection.
Another is when algorithms are non-adaptive but channels are with collision detection.
Packet latency of non-adaptive algorithms for channels without collision detection turned out to be more challenging to restrict.
The algorithm we developed achieves $\cO(nw)$ bound on packet latency.
This algorithm avoids conflicts for access to the channel.
As we showed,  packet latency has to be $\Omega(nw)$ for such conflict-avoiding algorithms.
This means that the developed algorithm is best possible in this class in terms of asymptotic packet latency.

The question if a non-adaptive algorithm can achieve packet latency that is asymptotically smaller than~$nw$, for channels without collision detection and for window-type adversaries of individual injection rates, remains open. 

The adversarial model considered in this work is of the window type.
It is a natural question to ask how to extend the adversarial model of individual injection rates to the general leaky-bucket case, and how would such an adversarial model affect algorithms' performance.
In the case of globally-constrained adversaries with injection rate~$1$, it was shown in~\cite{ChlebusKR09} that the two models of window and leaky-bucket adversaries make a difference even for small size systems.

%: bibliography

\bibliographystyle{abbrv}

\bibliography{mac-ind-inj-rates}

\end{document}